\newcommand{\be} {\begin{equation}} 
\newcommand{\ee} {\end{equation}}
\DeclarePairedDelimiter\ceil{\lceil}{\rceil}
\DeclarePairedDelimiter\floor{\lfloor}{\rfloor}
\newtheorem{theorem}{Theorem}
\newtheorem{corollary}{Corollary}
\newtheorem{definition}{Definition}
\newtheoremstyle{noparentheses}
  {\topsep}   
  {\topsep}   
  {\itshape}  
  {0pt}       
  {\bfseries} 
  {.}         
  {5pt plus 1pt minus 1pt} 
  {\thmname{#1} \thmnumber{#2} \thmnote{#3}}          
\theoremstyle{noparentheses}
\newtheorem{definition*}[definition]{Definition}
\newtheoremstyle{noparentheses}
  {\topsep}   
  {\topsep}   
  {\itshape}  
  {0pt}       
  {\bfseries} 
  {.}         
  {5pt plus 1pt minus 1pt} 
  {\thmname{#1} \thmnumber{#2} \thmnote{#3}}          
\theoremstyle{noparentheses}
\newtheorem{fact*}[fact]{Fact}
\newtheoremstyle{noparentheses}
  {\topsep}   
  {\topsep}   
  {\itshape}  
  {0pt}       
  {\bfseries} 
  {.}         
  {5pt plus 1pt minus 1pt} 
  {\thmname{#1} \thmnumber{#2} \thmnote{#3}}          
\theoremstyle{noparentheses}
\newtheorem{protocol*}[protocol]{Protocol}
\begin{document}

\title{Geometrical constructions of purity testing protocols and their applications to quantum communication}

\author{R\'obert Tr\'enyi}
\address{Department of Theoretical Physics,  University of the Basque Country UPV/EHU, P.O. Box 644, E-48080 Bilbao, Spain}
\address{EHU Quantum Center, University of the Basque Country UPV/EHU, Barrio Sarriena s/n, E-48940 Leioa, Biscay, Spain}
\address{HUN-REN Wigner Research Centre for Physics, P.O. Box 49, H-1525 Budapest, Hungary}
\address{Department of Theoretical Physics, University of Szeged, Tisza L. krt. 84-86, H-6720 Szeged, Hungary}

\author{Simeon Ball} \address{Departament de
Matem\`atiques, Universitat Polit\`ecnica de Catalunya, Jordi Girona 1-3 08034
Barcelona, Spain}

\author{David G. Glynn}
\address{College of Science and Engineering, Flinders University, G.P.O. Box 2100, SA 5001, Australia}

\author{Marcos Curty}
\address{Vigo Quantum Communication Center, University of Vigo, Vigo E-36310, Spain}
\address{Escuela de Ingenier\'ia de Telecomunicac\'ion, Department of Signal Theory and Communications, University of Vigo, Vigo E-36310, Spain}
\address{atlanTTic Research Center, University of Vigo, Vigo E-36310, Spain}

\begin{abstract}
Purity testing protocols (PTPs), i.e., protocols that decide with high
probability whether or not a distributed bipartite quantum state is maximally
entangled, have been proven to be a useful tool in many quantum communication
applications. In this paper, we provide geometrical constructions for such
protocols that originate directly from classical linear error correcting codes
(LECCs), in a way that the properties of the resulting PTPs are completely
determined from those of the LECCs used in the construction. We investigate the
implications of our results in various tasks, including error detection,
entanglement purification for general quantum error models and quantum message
authentication.
\end{abstract}

\maketitle

\section{Introduction}

Entangled states~\cite{HorodeckiEntanglement,toth_ent,Friis_ent_rev}, i.e.,
states that exhibit quantum correlations, are a crucial resource in many
applications of quantum information science, including quantum communication,
quantum computing and quantum metrology. In quantum communication, it has been
proven that a necessary precondition for successful quantum key distribution
(QKD), is that the legitimate users of the system (Alice and Bob) can detect the
presence of entanglement in a quantum state that is effectively distributed
between them~\cite{curtyPRL,preconditionAcin}. Entanglement also allows the
reliable transmission of quantum information over noisy and lossy channels,
since it can be employed to obtain a perfect quantum channel. Once Alice and Bob
share perfect Einstein-Podolsky-Rosen~\cite{EPR} (EPR) pairs, they can use them
to teleport~\cite{TeL} any quantum state with the aid of
classical communication. This procedure represents an alternative solution to
that based on quantum error correction
codes~\cite{Shor95,Steane96,ECOverGF4,GottesmanIntro}, where errors are actively
corrected after the transmission of the state. 

Due to the central status of entanglement in many quantum communication
scenarios, including the future quantum internet~\cite{quantum_internet}, a very
significant amount of research has been dedicated to the problem of finding good
criteria for separability~\cite{Separability,HorodeckiEntanglement,toth_ent}.
While the complete solution to this question is an NP-HARD
problem~\cite{Gurvits}, one can nevertheless find hierarchies of sufficient
criteria for entanglement that involve solving efficiently a convex
optimization problem in each step of the
hierarchy~\cite{Doherty,Doherty2,Doherty3,Doherty4}. Entanglement witnesses also
provide a way to decide whether a state is entangled~\cite{toth_ent}.

The manipulation of entangled states in noisy environments has also received
great attention. Entanglement purification protocols~\cite{Ben96a,Ben96,Ben96b,
HorodeckiEntanglement,distLuo} can distill perfect EPR pairs from a larger
number of noisy entangled states. The case where the initial states are
identical copies of a particular pure two-qubit entangled state was studied
in~\cite{Ben96a}. This result was then extended to the mixed states
scenario in~\cite{Ben96,Ben96b,Horodecki97} and also distillation experiments have
been performed~\cite{dist1,dist2}.

Although the ability to correct quantum errors is an essential ingredient in
many quantum communication protocols, there are also situations where it is
enough to detect with high probability when an error has
occurred~\cite{BARNUM_2002,Gisin04,err_det_recent}. If an error is detected, the
protocol simply discards the signal, i.e., the error is transformed into an
erasure. Such method has the potential of being simpler to implement than
correcting errors. 


A common starting point in the design of entanglement purification (error
 detection) protocols is a model for the source of errors to be corrected
 (detected). Relatively simple error models are often assumed, such as to
 consider that Alice and Bob share identical copies of the same state or,
 equivalently, that the noise acts independently on each signal. Indeed, this
 description is justified in many communication scenarios from technological
 considerations. However, there are also situations, like in the context of most
 cryptographic protocols, where the action of the channel is controlled by a
 third party (Eve), and thus the assumption of independent errors is not valid
 anymore~\cite{Lo99,Shor00,BARNUM_2002,Lo_Secure,Xu_Review}. Notably,
 entanglement purification and error detection schemes can also be adapted to
 work outside the independent error model \cite{BARNUM_2002,Ambainis02}. An
 interesting tool to achieve this goal is the use of purity testing protocols
 (PTPs). Basically, a PTP is an error detection scheme that can distinguish the
 state of perfect EPR pairs from any other state. These protocols have been used
 implicitly by Lo and Chau \cite{Lo99}, and Shor and Preskill~\cite{Shor00} in
 the context of security proofs for QKD. These results prove that it is possible
 to determine with very high accuracy whether or not a quantum state is a tensor
 product of EPR pairs. Remarkably, in the context of quantum message
 authentication, Barnum {\it et al.} \cite{BARNUM_2002} showed explicitly how to
 construct PTPs from purity testing codes (PTCs). The latter are sets of quantum
 error correcting codes (QECCs) that satisfy that most of the codes in the
 family detect any particular Pauli error (see Definition~\ref{def:Pauli_error}
 below). Moreover, by using results from projective geometry,~\cite{BARNUM_2002}
 demonstrated how to obtain a PTC with this covering property. Subsequent to
 their work, Ambainis, Smith and Yang~\cite{Ambainis02} pointed out that PTPs
 can also be used for entanglement purification even when no information about
 the error source is available but only about the fidelity of the shared state.
 In fact, PTPs can be considered as a special case of entanglement
 certification~\cite{Friis_ent_rev}.

In this paper we investigate PTPs in the same spirit as Barnum {\it et
al.}~\cite{BARNUM_2002}, and we generalize their results to show that they can
be constructed (see Theorem~\ref{Maintheorem} below) from classical linear error
correcting codes (LECCs). The analysis is based on known results in projective
geometry, but in contrast to~\cite{BARNUM_2002}, we remove the need of
considering the so-called normal rational curves in finite projective
spaces~\cite{Beut98,Hirs98,Casse_book}. Instead, we show that the
construction in~\cite{BARNUM_2002} corresponds to a particular LECC that
satisfies the Singleton bound~\cite{Mac77}. 

We explore the implications of our results for error detection~\cite{Gisin04},
entanglement purification and quantum authentication~\cite{BARNUM_2002,Ambainis02}. We note that PTCs have also been used in
secret sharing and secure multiparty computation~\cite{MultiPartyComputation,
ErrCorr, cipherQuantum} and our results might be relevant there as well. In the
case of quantum message authentication, since the
secret key required is an expensive resource,
schemes with key recycling have also been
considered~\cite{Portmann,HLM,reuse,cipherQuantum,garg,Fehr}. In this regard, we
show that our method allows reusing most of the key in subsequent rounds of the
authentication protocol. That is, our construction actually gives a so-called
strong stabilizer PTC (SPTC), where SPTC means that the PTC is constructed
from stabilizer QECCs.

The paper is organized as follows. In Section~\ref{PTP} we define formally what
is a PTP and we show how to obtain such a protocol from PTCs composed of
stabilizer QECCs. Section~\ref{Geo_without_barnum} contains the main results of
the paper. There, we present geometrical constructions for PTPs and PTCs based
on results from projective geometry. In particular, we show how PTPs and PTCs
can be obtained from classical LECCs. Section~\ref{Appl} analyses some
applications of the proposed PTPs. This includes error detection, entanglement
purification schemes for general quantum error models and quantum authentication
protocols. More specifically, we evaluate the performance of PTPs coming from
two families of LECCs. Finally, Section~\ref{Concl} summarizes our findings. The
paper has three Appendices, where we provide concrete examples for some abstract
mathematical notions that we use. In Appendix~\ref{app:companion_matrix}, we
describe the companion matrix formalism to represent the elements of finite
fields as matrices. In Appendix~\ref{app:symplectic_form}, we introduce how to
describe elements of the $n$-qubit Pauli group as binary strings. Finally, in
Appendix~\ref{app:example_stabilizers}, we provide a specific example that
elucidates the stabilizers constituting the PTCs. 

\section{Purity testing Protocols}\label{PTP}

A PTP is a quantum operation that can be implemented via local operations and
classical communication (LOCC), allowing Alice and Bob to check with high
confidence whether or not the quantum state they share corresponds to $n$
copies of the EPR pair $|\Phi^+\rangle=1/\sqrt{2}(|00\rangle+|11\rangle)$. When
the answer is negative the protocol discards the state. This means that
some potential EPR pairs might be sacrificed in the test process.

\begin{definition*}[\cite{BARNUM_2002}]\label{Def:PTP} A PTP with error
$\epsilon$ is a LOCC quantum operation ${\cal O}$ which maps $2n$ qubits (half
held by Alice and half held by Bob) to $2m+1$ qubits ($m+1$ held by Alice
and $m$ held by Bob) and satisfying the following two conditions:
\begin{enumerate}
\item {\textrm Completeness:} ${\cal
O}(|\Phi^{+}\rangle^{\otimes{}n})=|\Phi^{+}\rangle^{\otimes{}m}\otimes|{\rm
ACC}\rangle$,
\item {\textrm Soundness:} $\Tr(P\ {\cal O}(\rho))\geq{}1-\epsilon\quad\forall\
\rho$,
\end{enumerate}

\noindent where $P$ represents the projection on the subspace spanned by
$|\Phi^{+}\rangle^{\otimes{}m}\otimes|{\rm ACC}\rangle$ and
$|\psi\rangle\otimes{}|{\rm REJ}\rangle$, $\forall\ |\psi\rangle$. The states
$\ket{{\rm ACC}}$ and $\ket{{\rm REJ}}$ are orthogonal single qubit states,
representing whether the parties accept or reject the input state as
$\ket{\Phi^{+}}^{\otimes n}$, respectively.
\end{definition*}

\noindent We emphasize that since 
\begin{equation}P=\mathds{1}_{2m+1}-\left(\mathds{1}_{2m}-\ketbra{\Phi^{+}}{\Phi^{+}}^{\otimes{}m}\right)\otimes \ketbra{{\rm ACC}}{{\rm ACC}},
\end{equation} where  $\mathds{1}_{2m+1}$ denotes the identity operator on the $2m+1$ qubit space, the soundness condition can be written as
\begin{align}&\textrm{Tr}(P\ {\cal O}(\rho))=\\ \nonumber &1-\textrm{Tr}\left\lbrace\left[\left(\mathds{1}_{2m}-\ketbra{\Phi^{+}}{\Phi^{+}}^{\otimes{}m}\right)\otimes \ketbra{{\rm ACC}}{{\rm ACC}}\right] {\cal O}(\rho)\right\rbrace.
\end{align}
From this it is clear that Definition~\ref{Def:PTP} requires that the
probability of accepting a quantum state different from
$\ket{\Phi^{+}}^{\otimes{}m}$ is smaller or equal than $\epsilon$. 

Basically, a PTP can be interpreted as a protocol that approximates the von
Neumann measurement given by the projection onto
$|\Phi^{+}\rangle^{\otimes{}m}$ and its orthogonal complement.

Before providing a method for constructing PTPs,
 let us define the notions of a Pauli error and the Pauli group.
\begin{definition}\label{def:Pauli_error} An $n$-qubit Pauli error, $E_t$,
   is a unitary operator of the form $E_t=c\,w_1\otimes\ldots\otimes{}w_n$,
   where each $w_j$ represents a Pauli matrix ($I, \sigma_x, \sigma_y,
   \sigma_z$) and the phase factor $c\in \{1,-1,i,-i\}$. The set of all Pauli
   errors is called the Pauli group, denoted as $E$, and it is a subgroup of the
   unitary group $U(2^n)$. \end{definition} 
   
 A particularly
efficient method to construct PTPs is the one proposed by Barnum {\it et al.}
\cite{BARNUM_2002}. It is based on the use of special sets of stabilizer
QECCs \cite{Got97}, $\{Q_k\}$, with the following property: given an
arbitrary non-trivial Pauli error $E_t$, if we select at random and {\it a
posteriori} a $Q_k$ within the set, then the probability that $Q_k$ does not
detect the error $E_t$ is bounded by a parameter $\gamma$. Such a set of
QECCs is called a stabilizer purity testing code (SPTC).

\begin{definition*}[\cite{BARNUM_2002,Portmann}]\label{PTC} An SPTC with error
probability $\gamma$ is a set of stabilizer QECCs $\{Q_k\}$, with $k\in{\cal
K}$, such that for all Pauli errors $E_t$ in the Pauli group $E$, with
$E_t\neq{}\mathds{1}$, and for $k$ selected at random in ${\cal K}$ then

\begin{equation}
Pr(E_t\in Q_k^{\perp}\setminus Q_k)\leq\gamma.
\end{equation}
In other words, the percentage of $k$'s in ${\cal K}$ correcting $E_t$ is
at most $\gamma$. Moreover, if
\begin{equation}
  Pr(E_t\in Q_k^{\perp})\leq\gamma, \end{equation}then the set is
  a strong SPTC with error probability $\gamma$.
\end{definition*}
We remark that with this notation the stabilizer $Q_k$ is an
abelian subgroup of the Pauli group $E$ and $Q_k^{\perp}$ is the centralizer of
this subgroup $Q_k$ in $E$~\cite{Nielsen}, meaning that it contains the Pauli
errors that commute with all the elements of $Q_k$ (errors in $Q_k$ are not
detectable by the quantum code).

Given an SPTC, $\{Q_k\}$, with error $\gamma$ and a quantum state $\rho$, it is
straightforward to construct a PTP of error $\epsilon=\gamma$ to test $\rho$,
therefore from now on we will also use $\epsilon$ to denote the error of the
SPTCs. For this Alice and Bob need to agree first on a particular random
$k\in{\cal K}$. Subsequently, they need to measure the syndrome of $Q_k$ in
their respective quantum subsystems. If Alice and Bob actually hold the state
$\ket{\Phi^{+}}^{\otimes{}n}$ then upon their measurements they will obtain the
same strings due to the quantum correlations present in the
$\ket{\Phi^{+}}^{\otimes{}n}$ state. But if they hold an erroneous state, say,
$(E_t\otimes\mathds{1}_n)\ket{\Phi^{+}}^{\otimes{}n}$ with $E_t\neq{}\mathds{1}$
being a Pauli error (see Definition~\ref{def:Pauli_error}), then it is likely
that they will find different syndromes. Note that due to the freedom in
choosing the encoded logical basis states~\cite{Nielsen} in the stabilizer space
of $Q_k$ we can think of each half of the $\ket{\Phi^{+}}^{\otimes{}n}$ state as
if they were the encoded versions of the corresponding halves of the
$\ket{\Phi^{+}}^{\otimes{}m}$ state ($m\leq n$) with a specific syndrome. This
means that if both syndromes are the same Alice and Bob accept the quantum
state, perform the decoding procedure for the code $Q_k$ and obtain a
quantum state close to $\ket{\Phi^{+}}^{\otimes{}m}$ with $m\leq n$; otherwise
they discard the quantum state. For a precise proof of the above statement we
refer to~\cite{BARNUM_2002}. In this way the problem of constructing PTPs can be
reduced to the problem of obtaining SPTCs.

Next we present an efficient method to create SPTCs (and, therefore, also PTPs)
from classical LECCs. 

\section{Geometrical construction}\label{Geo_without_barnum} 

The method for constructing SPTCs in~\cite{BARNUM_2002} constitutes a
special case of a more general principle based on the use of classical linear
error correcting codes. In this Section, we prove in Theorem~\ref{thm:ptc}, that
any classical LECC $[c,2r,d]_q$ over the finite (Galois)
field $GF(q)$ naturally gives a SPTC with parameters that follow directly from
those of the LECC. This is possible because every column
of the generator matrix of the code $[c,2r,d]_q$ corresponds to a point in
$PG(2r-1,q)$ and, as we prove in Theorem~\ref{thm:points_stab}, every point in
the projective space $PG(2r-1,q)$ gives a stabilizer QECC. Most importantly, the
general properties of a LECC guarantee that the set of
stabilizers obtained in that way form a SPTC. The advantage of this method is
that it significantly simplifies the process of obtaining SPTCs when compared
to that based on the use of normal rational curves in finite projective
spaces~\cite{BARNUM_2002}. 

We refer the reader to~\cite{Beut98,Hirs98,Casse_book,mullen_fields,lidl_finite_fields} for a more extensive
list of properties regarding the finite field $GF(q)$ and the projective space
$PG(2r-1,q),$ over $GF(q).$ For simplicity, we shall assume that $q=2^s$
throughout the paper. We remark, however, that the proofs below can be
generalized straightforwardly for the case when $q=p^s,$ where $p>2$ is a
prime, where qudits instead of qubits are involved.

The projective space $PG(2r-1,q)$ is basically the lattice of all subspaces of
the vector space $V(2r,q).$ The homogeneous coordinates $(a_0, ...,
a_{2r-1})$ for a point in $PG(2r-1,q)$ with $a_i\in GF(q)$ such that not all
$a_i$'s are zero form a vector generating the corresponding $1$-dimensional
subspace of $V(2r,q)$. This means that $\lambda(a_0,
..., a_{2r-1})$ represents the same point in $PG(2r-1,q)$ for all $\lambda\in
GF(q)\setminus\{0\}$. 

That is, the coordinates of a point of $PG(2r-1,q)$ are elements of
$GF(2^s).$ Precisely, one can think of $GF(2^s)$ as an $s$-dimensional vector
space $V(s,2)$ over $GF(2),$ where $GF(2)=\{0,1\}$ with the usual multiplication
and addition modulo 2~\cite{mullen_fields,Casse_book}. This means, therefore,
that one can equivalently think of a point in $PG(2r-1,q)$ as a
1-dimensional subspace of $V(2r,q)$ or as an $s$-dimensional
subspace of $V(2rs,2)$. Another useful way to represent the
elements of $GF(2^s)$ is via, for example, the companion matrix
formalism~\cite{lidl_finite_fields,Albert37}, where each element of $GF(2^s)$ corresponds
to an $s\times s$ matrix with entries from $GF(2)$. We describe this latter
method in Appendix~\ref{app:companion_matrix}, where for illustration purposes
we provide a specific example for representing the field $GF(4)$. This shows
that a point in $PG(2r-1,2^s)$ defines an $s$-dimensional subspace of
$V(2rs,2).$ The generators of this subspace are vectors in $V(2rs,2)$ and each
of these $s$ vectors defines a Pauli error (i.e., they are elements of the
$n$ qubit Pauli group but without phase factors) via the correspondence
described in Appendix~\ref{app:symplectic_form}~\cite{Ball_geo,GottesmanIntro}.
In this latter Appendix, we also introduce a canonical symplectic form over $V(2rs,2)$
that captures the commutation relation between Pauli errors. These Pauli errors
are the generators of a stabilizer QECC as we prove in
Theorem~\ref{thm:points_stab}.  

As it is described in Appendix~\ref{app:symplectic_form}, the subspace that the
Pauli errors (i.e., the vectors in $V(2rs,2)$) form, has to be totally isotropic
with respect to a non-degenerate symplectic form to obtain a stabilizer QECC.
Isotropy makes sure that the generators of the stabilizer commute.

It is clear how the canonical symplectic form from
Appendix~\ref{app:symplectic_form} captures whether the corresponding Pauli
errors (of the vectors in $V(2rs,2)$) commute. In the proof, however, for the
sake of conciseness, we use a symplectic form~\cite{BARNUM_2002}
based on the field trace~\cite{mullen_fields}:
\begin{equation}\label{eq:trace_form}
  (x,y)_{\Tr}=\Tr_{2^{2rs}\rightarrow 2}\left(xy^{2^{rs}}\right),
\end{equation}
where $x,y\in GF(2^{2rs})\equiv V(2rs,2)$ represent elements of the $rs$-qubit
Pauli group via the correspondence described in
Appendix~\ref{app:symplectic_form} and the field trace is defined as
\begin{equation}\label{eq:field_trace}
\Tr_{b^s\rightarrow b}(x)=x+x^b+...+x^{b^{s-1}},
\end{equation} 
where $b$ is a prime power and $x\in GF(b^s).$ Note that we can use the form
from Eq.~\eqref{eq:trace_form} due to the fact that all non-degenerate
symplectic forms are equivalent on $V(2r,q)$ and also on $V(2rs,2)$. Therefore,
the field trace-based symplectic form captures the same commutation relations
between the generators as the canonical symplectic form from
Appendix~\ref{app:symplectic_form}. Considering all the above facts we can now
prove the following Theorem.

\begin{theorem}\label{thm:points_stab} A point in the projective space
$PG(2r-1,2^s)$ corresponds to a stabilizer quantum error correcting
code $[[rs,rs-s]]$ encoding $rs-s$ qubits into $rs$ qubits.  
  \end{theorem}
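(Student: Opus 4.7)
My plan is to take an arbitrary point $P\in PG(2r-1,2^s)$ with homogeneous coordinates $(a_0,\ldots,a_{2r-1})\in GF(2^s)^{2r}$ and associate to it an $s$-dimensional subspace of $V(2rs,2)$ that is totally isotropic with respect to the symplectic form~\eqref{eq:trace_form}; by the correspondence of Appendix~\ref{app:symplectic_form}, any such subspace provides $s$ independent commuting Pauli operators on $rs$ qubits, i.e., a generating set for a stabilizer QECC $[[rs,rs-s]]$. Concretely, I would fix a $GF(2^s)$-basis $\{1,\beta,\ldots,\beta^{2r-1}\}$ of $GF(2^{2rs})$ and identify $P$ with the element $\xi=a_0+a_1\beta+\cdots+a_{2r-1}\beta^{2r-1}\in GF(2^{2rs})$. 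The $GF(2^s)$-line $L=\{\lambda\xi:\lambda\in GF(2^s)\}$ is a $1$-dimensional $GF(2^s)$-subspace of $GF(2^{2rs})$ and, equivalently, an $s$-dimensional $GF(2)$-subspace of $V(2rs,2)$; any $GF(2)$-basis of $GF(2^s)$ then yields $s$ generators of $L$ over $GF(2)$.

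The main step is to verify that $L$ is totally isotropic. For any $\lambda,\mu\in GF(2^s)$ one has
\begin{equation}
  (\lambda\xi,\mu\xi)_{\Tr}=\Tr_{2^{2rs}\to 2}\left(\lambda\mu^{2^{rs}}\,\xi^{1+2^{rs}}\right).
\end{equation}
Since $s\mid rs$, the inclusion $GF(2^s)\subseteq GF(2^{rs})$ gives $\mu^{2^{rs}}=\mu$; moreover $\xi^{1+2^{rs}}=\xi\cdot\xi^{2^{rs}}$ is the norm of $\xi$ in the quadratic extension $GF(2^{2rs})/GF(2^{rs})$, and hence lies in $GF(2^{rs})$. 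Therefore $z:=\lambda\mu\,\xi^{1+2^{rs}}\in GF(2^{rs})$. Applying the tower-of-traces identity $\Tr_{2^{2rs}\to 2}=\Tr_{2^{rs}\to 2}\circ\Tr_{2^{2rs}\to 2^{rs}}$ together with $\Tr_{2^{2rs}\to 2^{rs}}(z)=z+z^{2^{rs}}=2z=0$ in characteristic $2$, we conclude that $(\lambda\xi,\mu\xi)_{\Tr}=0$ for all $\lambda,\mu\in GF(2^s)$, as required.

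To finish, I would invoke the stabilizer formalism: the $s$-dimensional isotropic subspace $L$ yields $s$ independent, pairwise commuting $rs$-qubit Pauli operators (with phases chosen so that $-\mathds{1}$ is excluded from the abelian group they generate), which by definition generate a stabilizer QECC encoding $rs-s$ logical qubits into $rs$ physical qubits. The hard part of the argument is the isotropy computation; once the identification $L=GF(2^s)\cdot\xi\subset GF(2^{2rs})$ is made and the trace form~\eqref{eq:trace_form} is chosen, the proof reduces to the clean characteristic-$2$ collapse of $\Tr_{2^{2rs}\to 2^{rs}}$ restricted to $GF(2^{rs})$, while the remaining work is bookkeeping between the three equivalent viewpoints (point of $PG(2r-1,2^s)$; $1$-dimensional $GF(2^s)$-subspace of $V(2r,2^s)$; $s$-dimensional $GF(2)$-subspace of $V(2rs,2)$).
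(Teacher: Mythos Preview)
Your proof is correct and follows essentially the same approach as the paper: identify a point of $PG(2r-1,2^s)$ with a nonzero $\xi\in GF(2^{2rs})$, take the $GF(2^s)$-line through $\xi$ as the $s$-dimensional $GF(2)$-subspace, and verify total isotropy with respect to the trace form~\eqref{eq:trace_form} using that $\xi^{1+2^{rs}}$ is fixed by $x\mapsto x^{2^{rs}}$. Your choice of intermediate field in the trace tower ($GF(2^{rs})$ rather than the paper's $GF(2^{s})$) makes the vanishing step slightly more direct---$\Tr_{2^{2rs}\to 2^{rs}}(z)=z+z=0$ for $z\in GF(2^{rs})$---but the underlying mechanism is identical.
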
 
  \begin{proof}
    As above, we consider a point in $PG(2r-1,2^s)$ as $\lambda x\neq 0$, where $x \in
    GF(2^{2rs})$ and $\lambda \in GF(2^s)$. To prove that this $s$-dimensional
    subspace is totally isotropic with respect to $(x,y)_{\rm Tr},$ defined in
    Eq.~\eqref{eq:trace_form}, we need to prove that $(\lambda x,\mu x)_{\rm
    Tr}=0$, for all $\lambda$, $\mu \in GF(2^s)$. 
  Since $\mu \in GF(2^s)$ we have that 
  \begin{align}(\lambda x,\mu x)_{\rm
  Tr}&=\Tr_{2^{2rs}\rightarrow 2} (\lambda \mu^{2^{rs}} x^{2^{rs}+1})
  \\\nonumber &=\Tr_{2^{2rs}\rightarrow 2} (\lambda \mu x^{2^{rs}+1}).\end{align} 
  Since $$\Tr_{2^{2rs}\rightarrow 2}
  (y)=\Tr_{2^s\rightarrow 2}\left(\Tr_{2^{2rs}\rightarrow 2^s}(y)\right),$$
  for all $y\in GF(2^{2rs}),$ we can write that 
  \begin{align}
  (\lambda x,\mu x)_{\Tr}=\Tr_{2^s\rightarrow 2}\left(\lambda \mu\Tr_{2^{2rs}\rightarrow 2^s}\left(x^{2^{rs}+1}\right)\right),
  \end{align}
  which is zero since 
  \begin{align}
    &\Tr_{2^{2rs}\rightarrow 2^s}(x^{2^{rs}+1})=\big[x^{2^{rs}+1}+(x^{2^{rs}+1})^{2^s}+\cdots \\\nonumber
    &+(x^{2^{rs}+1})^{2^{s(r-1)}}\big] +\big[(x^{2^{rs}+1})^{2^{sr}}+\cdots +(x^{2^{rs}+1})^{2^{s(2r-1)}}\big]\\\nonumber
    &=\Tr_{2^{rs}\rightarrow 2^s}(x^{2^{rs}+1})+\Tr_{2^{rs}\rightarrow 2^s}(x^{2^{rs}+1})=0.
  \end{align}
  Note that here we use the relation $$(x^{2^{rs}+1})^{2^{rs}}=x^{2^{rs}+1},$$ 
  and the fact that the characteristic of the underlying field is 2, which implies that
  $1=-1.$ In short, with this, we have proven that the subspace corresponding to a point in
  $PG(2r-1,2^s)$ is totally isotropic with respect to a symplectic form,
  therefore it gives a stabilizer QECC. The stabilizer space has $s$ independent
  generators and each generator corresponds to $rs$ qubits as it is shown in
  Appendix~\ref{app:symplectic_form}. So this means that it encodes $rs-s$
  qubits into $rs$ qubits and is an $[[rs,rs-s]]$ QECC.
  \end{proof}

For an explicit construction of the stabilizers, we refer the reader to
Appendix~\ref{app:example_stabilizers}, where we use the companion
matrix formalism from Appendix~\ref{app:companion_matrix} to obtain the
$s$-dimensional subspace of $V(2rs,2).$ Then we employ the correspondence
introduced in Appendix~\ref{app:symplectic_form} to obtain the Pauli errors
constituting each stabilizer QECC.

\begin{theorem}\label{thm:ptc} Given a $[c,2r,d]_q$ LECC,
$C$, where $q=2^s$, then there is a strong SPTC consisting of $c$ stabilizer QECCs,
$\{Q_k\}$, with parameters $[[rs,rs-s]]$ such as its error rate is
$\epsilon=1-d/c$.
\end{theorem}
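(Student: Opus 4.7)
The plan is to build the $c$ stabilizer codes directly from the columns of a generator matrix of $C$ and then bound the error probability by a hyperplane-counting argument. Each of the $c$ columns of any generator matrix $G$ of $C$ is a non-zero vector in $V(2r,q)$ and hence defines a point in $PG(2r-1,q)$, so by Theorem~\ref{thm:points_stab} it gives a stabilizer QECC $Q_k$ with parameters $[[rs,rs-s]]$. This already furnishes the family $\{Q_k\}_{k=1}^{c}$ with the advertised code parameters; what remains is the strong SPTC bound $\Pr(E_t\in Q_k^{\perp})\le 1-d/c$ for every non-trivial Pauli error $E_t$.

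Next I would translate the centralizer condition into geometry on $PG(2r-1,q)$. The stabilizer $Q_k$ corresponds to the $s$-dimensional $GF(2)$-subspace $S_k=\{\lambda v_k:\lambda\in GF(q)\}$ of $V(2rs,2)\cong V(2r,q)$, so $E_t\in Q_k^{\perp}$ amounts to $E_t$ being symplectically orthogonal over $GF(2)$ to every element of $S_k$. Writing the $GF(2)$-symplectic form as the composition of a $GF(q)$-bilinear non-degenerate symplectic form $B_q$ on $V(2r,q)$ with the field trace, which is legitimate by Eq.~\eqref{eq:trace_form} together with the fact that all non-degenerate symplectic forms are equivalent, the orthogonality condition becomes $\Tr_{2^s\to 2}\bigl(\lambda\, B_q(e_t,v_k)\bigr)=0$ for all $\lambda\in GF(q)$, where $e_t$ denotes the vector in $V(2r,q)$ representing $E_t$. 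Non-degeneracy of the trace as a $GF(2)$-linear form on $GF(q)$ then collapses this to the single $GF(q)$-scalar equation $B_q(e_t,v_k)=0$. Hence $Q_k$ fails to detect $E_t$ precisely when $v_k$ lies in the hyperplane $H_{e_t}\subset PG(2r-1,q)$ determined by the non-zero linear functional $v\mapsto B_q(e_t,v)$.

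Counting then reduces to a classical coding-theoretic fact. If $\alpha\in V(2r,q)$ is the dual vector defining $H_{e_t}$, then the columns $v_k$ lying in $H_{e_t}$ are exactly the coordinates at which the codeword $\alpha G$ of $C$ vanishes. Since $\alpha\neq 0$ and $G$ has rank $2r$, the codeword $\alpha G$ is non-zero and so has weight at least $d$, meaning it has at most $c-d$ zero coordinates. Therefore $|\{k:E_t\in Q_k^{\perp}\}|\le c-d$, and choosing $k$ uniformly at random in $\{1,\dots,c\}$ yields $\Pr(E_t\in Q_k^{\perp})\le (c-d)/c=1-d/c$, which is exactly the strong SPTC condition with $\epsilon=1-d/c$.

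The step I expect to be most delicate is the symplectic-form identification in the middle paragraph: one has to verify that $GF(2)$-symplectic orthogonality of $E_t$ to the whole $s$-dimensional blow-up $S_k$ really does collapse to a single $GF(q)$-scalar equation, rather than to $s$ a priori independent $GF(2)$-equations whose common solution set could be much larger. Once that collapse is in hand, the observation that any hyperplane of $PG(2r-1,q)$ contains at most $c-d$ of the generator-matrix columns is an immediate restatement of the minimum distance, and the proof is complete.
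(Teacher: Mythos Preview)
Your proposal is correct and follows essentially the same route as the paper: construct the $c$ stabilizers from the columns of $G$ via Theorem~\ref{thm:points_stab}, collapse the $GF(2)$-orthogonality condition $(\lambda v_k,e_t)=0$ for all $\lambda$ to the single $GF(q)$-equation $B_q(e_t,v_k)=0$, and then read off the bound from the minimum distance of $C$ by counting zero coordinates of the codeword $\alpha G$. The only cosmetic difference is in the collapse step you flagged as delicate: the paper proves it by choosing a trace-orthonormal basis of $GF(2^s)$ over $GF(2)$ (citing Seroussi--Lempel) and computing explicitly, whereas you invoke non-degeneracy of the trace pairing $(\lambda,\mu)\mapsto\Tr_{2^s\to 2}(\lambda\mu)$ directly; these are equivalent justifications of the same fact.
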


\begin{proof}
A generator matrix $G$ of the code $C$ is a $2r\times c$ matrix with entries
from $GF(2^s).$ Thus, each column of the matrix $G$ corresponds to a point in
$PG(2r-1,2^s).$ This matrix has $c$ columns, so based on
Theorem~\ref{thm:points_stab}, this means that we have $c$ stabilizer QECCs with
each of them encoding $rs-s$ qubits into $rs$ qubits. Next, we have to determine
that if we take a random Pauli error from the $rs$-qubit Pauli group then at
most how many of these $c$ stabilizer QECCs (i.e., abelian subgroups)
can fail in detecting the error. An error is undetectable for a stabilizer if it
commutes with all the generators of the code (i.e., if it is in the centralizer
of the stabilizer group). We remark, that it is possible that such an ``error''
is actually in the stabilizer subspace, in which case it is not a true error
since it does not alter the states stabilized by the code, nevertheless we
consider it as an undetectable error. 

Let us now formalize mathematically what is an undetectable error for a
stabilizer. Recall that we can consider the vectors of $V(2r,2^s)$ as elements
of $GF(2^{2rs})$, so a Pauli error $e$ is given by such an element (i.e., this
is a representation of a certain $E_t$ via for example the companion matrix
representation). It commutes with the abelian subgroup we obtain from the column
$x$ if and only if $(\lambda x,e)_{\Tr}=0$ for all $\lambda \in GF(2^s)$, i.e., the
symplectic product of $e$ and $\lambda x$ is zero. Since all non-degenerate symplectic forms are equivalent,
considering $e$ and $\lambda x$ now in the vector space $V(2r,2^s)$ (which is
equivalent to considering it as a vector in $V(2rs,2)$), there is an equivalent
symplectic form to $(x,e)_{\Tr}$ given by
\begin{equation}\label{eq:symplectic_product}
(x,e)_{\tr}=\sum_{j=1}^r \Tr_{2^s \rightarrow 2}(e_{r+j}x_j-e_j x_{r+j}),
\end{equation} 
where we note that the minus
sign is not necessary since we are working modulo 2. Now, as mentioned
before, $e$ commutes with $x$ iff
\begin{equation}
0=(\lambda x,e)_{\tr}=\sum_{j=1}^r \Tr_{2^s \rightarrow 2}\big( \lambda (e_{r+j}x_j-e_jx_{r+j})\big)
\end{equation}
for all $\lambda \in GF(2^s)$. We want to conclude that
\begin{equation}\mu\coloneq(x,e)=\sum_{j=1}^r (e_{r+j}x_j-e_j x_{r+j})=0,
\end{equation}
where $\mu \in GF(2^s).$ By Theorem 4 from~\cite{Seroussi} there is a
basis $B=\{b_1,\ldots,b_s\}$ for $GF(2^s)$ over $GF(2)$ with the property that
$\Tr_{2^s \rightarrow 2}(b_ib_j)=\delta_{ij}$, where $\delta_{ij}$ is the
Kronecker delta. Considering $\lambda$ and $\mu$ over the basis $B$ we can write
$$
\lambda=\sum_{k=1}^s \lambda_k b_k, \ \, \mu=\sum_{l=1}^s \mu_l b_l.
$$
Then
\begin{align}\label{eq_symplectic_form}
&\sum_{j=1}^r\Tr_{2^s \rightarrow 2}\big( \lambda (e_{r+j}x_j-e_jx_{r+j})\big)\\\nonumber
&=\sum_{k,l} \Tr_{2^s \rightarrow 2}(\lambda_k\mu_l b_kb_l)\\\nonumber
&=\sum_{k,l} \lambda_k\mu_l \Tr_{2^s \rightarrow 2}(b_kb_l)\\\nonumber
&=\lambda_k\mu_l \delta_{kl}
\end{align}
Since this is zero for all $\lambda \in GF(2^s)$, we can choose $\lambda_k=1$
and $\lambda_l=0$ for $l\neq k$ and conclude that $\mu_k=0$ for all $k \in
\{1,\ldots,s\}$. Hence $\mu=(x,e)=0$. Thus, considering a Pauli error as
\begin{equation}e=(e_{r+1},\dots,e_{2r},-e_1,\dots,-e_r),\end{equation} we see that the zero coordinates
of the codeword
\begin{equation}\label{eq:error}
eG=e'=(e'_1,e'_2,...,e'_{c})
\end{equation} 
indicate the abelian subgroups with which $e$ commutes. Thus, if
$e$ commutes with the $i$-th stabiliser, meaning that it is an undetectable
error for the $i$-th stabilisier, then $e_i'=0$. The codeword $eG$ has at most
$c-d$ zero coordinates since the code $C$ has a minimum distance
$d$~\cite{Ball_geo}, thus at most $c-d$ codes fail to detect the error $e$. So
the error rate is $\epsilon=(c-d)/c=1-d/c.$\end{proof}


\begin{theorem}\label{Maintheorem} Given a $[c,2r,d]_q$ LECC, $C$, where $q=2^s$, then there is a PTP with error $\epsilon=1-d/c$ which
maps $2rs$ qubits (half held by Alice and half held by Bob) to $2(rs-s)+1$
qubits.
\end{theorem}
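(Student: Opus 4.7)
The plan is to combine Theorem \ref{thm:ptc}, which produces from the given $[c,2r,d]_q$ LECC a strong SPTC $\{Q_k\}_{k\in{\cal K}}$ with $|{\cal K}|=c$, each $Q_k$ an $[[rs,rs-s]]$ stabilizer QECC, and error rate $\gamma=1-d/c$, with the generic SPTC-to-PTP construction already recalled in Section \ref{PTP}. Every parameter of the target PTP then falls out by direct book-keeping: the input system has $n=rs$ qubits per party (total $2rs$), the output has $m=rs-s$ qubits per party plus one classical accept/reject flag (total $2(rs-s)+1$), and the PTP error inherits $\epsilon=\gamma=1-d/c$.

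First I would specify the LOCC operation ${\cal O}$ realising the PTP. Alice and Bob use shared classical randomness to pick $k\in{\cal K}$ uniformly; each party locally measures the syndrome of the stabilizer code $Q_k$ on their $rs$ qubits and broadcasts the outcome. If the two syndrome strings agree, they set the flag qubit to $|{\rm ACC}\rangle$ and each applies the $Q_k$ decoder, producing $rs-s$ qubits per side; otherwise they set the flag to $|{\rm REJ}\rangle$. This is manifestly an LOCC map with the claimed input and output dimensions.

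Next I would verify the two conditions of Definition \ref{Def:PTP}. Completeness follows from perfect Bell correlations: on input $|\Phi^{+}\rangle^{\otimes rs}$ every local Pauli measurement yields matching outcomes on both sides, so syndromes coincide and, using the standard freedom in the choice of encoded logical basis discussed below Definition \ref{PTC}, decoding returns $|\Phi^{+}\rangle^{\otimes(rs-s)}\otimes|{\rm ACC}\rangle$. For soundness, I would expand an arbitrary $\rho$ in the twisted Bell basis $\{(E_t\otimes\mathds{1}_{rs})|\Phi^{+}\rangle^{\otimes rs}\}_{E_t\in E}$; the only components contributing to the unwanted ``accepted but far from $|\Phi^{+}\rangle^{\otimes(rs-s)}$'' event are those with $E_t\neq\mathds{1}$ that happen to lie in $Q_k^{\perp}$ for the randomly chosen $k$. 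Averaging over $k\in{\cal K}$, the strong-SPTC bound $\Pr(E_t\in Q_k^{\perp})\leq\gamma$ furnished by Theorem \ref{thm:ptc} gives $\mathrm{Tr}(P\,{\cal O}(\rho))\geq 1-\gamma=1-\epsilon$, which is exactly the soundness inequality.

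The argument is essentially assembly; the only mild obstacle is the soundness step, where one must invoke the standard analysis of \cite{BARNUM_2002} to translate the Pauli-component average into the trace formulation of Definition \ref{Def:PTP}, and keep careful track of the logical-basis ambiguity so that completeness holds on the nose rather than merely approximately. Once that translation is in place, Theorem \ref{Maintheorem} is an immediate corollary of Theorem \ref{thm:ptc}.
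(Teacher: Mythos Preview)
Your proposal is correct and follows exactly the paper's approach: the paper's proof simply states that the result is straightforward from Theorem~\ref{thm:ptc} together with the SPTC-to-PTP construction of Section~\ref{PTP}. You have actually spelled out the LOCC map and the completeness/soundness verification in more detail than the paper does, but the underlying logic is identical.
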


\begin{proof} The proof is straightforward from Theorem \ref{thm:ptc}
and the method to construct PTPs from SPTCs introduced in Section \ref{PTP}.
\end{proof}

The results in Theorem \ref{thm:ptc} (Theorem \ref{Maintheorem}) constitute an
efficient method to construct SPTCs (PTPs) from well established classical
results on coding theory. Moreover, it is straightforward to obtain the relevant
parameters of the SPTC (PTP) from the properties of the classical LECC employed in its design. In fact, it can be shown that the method proposed
in \cite{BARNUM_2002} constitutes a particular case of Theorem \ref{thm:ptc}.

\begin{corollary}\label{cor1} A maximum distance separable (MDS) code
$[q+1,2r,q+2-2r]_q$, where $q=2^s$, gives a SPTC being a collection of $q+1$
$[[rs,rs-s]]$ stabilizer QECCs with error $\epsilon=(2r-1)/(q+1)$ as described
in~\cite{BARNUM_2002}.
\end{corollary}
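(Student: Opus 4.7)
The plan is to apply Theorem \ref{thm:ptc} directly, instantiating it with the parameters of the MDS code specified in the hypothesis, and then verify that the construction coincides with the one given by Barnum \emph{et al.}~\cite{BARNUM_2002}. So first I would simply substitute $c = q+1$ and $d = q+2-2r$ into the formulas provided by Theorem \ref{thm:ptc}: the theorem produces $c = q+1$ stabilizer QECCs with parameters $[[rs, rs-s]]$, which matches the claim, and the error rate becomes
\begin{equation}
\epsilon = 1 - \frac{d}{c} = 1 - \frac{q+2-2r}{q+1} = \frac{2r-1}{q+1},
\end{equation}
exactly as required. No new geometric or algebraic argument is needed for the numerical part of the statement; it is a mechanical specialization of the more general theorem.

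The substantive content of the corollary is the claim that this is \emph{the same} construction as in~\cite{BARNUM_2002}. For this I would recall that the construction of Barnum \emph{et al.} starts from a normal rational curve $\Gamma = \{(1,t,t^2,\dots,t^{2r-1}) \mid t \in GF(q)\} \cup \{(0,\dots,0,1)\}$ in $PG(2r-1,q)$, which contains exactly $q+1$ points and has the property that every $2r$ of them are linearly independent. Taking these $q+1$ points as the columns of a generator matrix yields precisely a Reed--Solomon code, which is MDS with parameters $[q+1, 2r, q+2-2r]_q$ and saturates the Singleton bound $d \le c - 2r + 1$. Thus their ``blowing up'' of points of the normal rational curve into $s$-dimensional totally isotropic subspaces of $V(2rs,2)$ is recovered as the special case of Theorem~\ref{thm:points_stab} applied column by column to a Reed--Solomon generator matrix.

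The only conceptual step worth double-checking is that the $\epsilon$ obtained here agrees with the bound derived in~\cite{BARNUM_2002}, namely that at most $2r-1$ points of a normal rational curve lie on any hyperplane of $PG(2r-1,q)$. Under our dictionary, a Pauli error corresponds to a hyperplane (equivalently, to the vector $e$ in Eq.~\eqref{eq:error}), and the stabilizers that fail to detect it are exactly those associated with points of the curve lying on that hyperplane, i.e., with zero entries of the codeword $eG$. Since the minimum distance of the Reed--Solomon code is $d = q+2-2r$, at most $c-d = 2r-1$ such zero entries can occur, giving $\epsilon \le (2r-1)/(q+1)$ in perfect agreement with the generic bound $1 - d/c$ from Theorem~\ref{thm:ptc}.

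I do not anticipate any real obstacle here: the only thing one might worry about is a mismatch between the ``$n$-arc'' language used implicitly by Barnum \emph{et al.} and the coding-theoretic MDS language, but these are known to be equivalent, so the corollary amounts to a clean bookkeeping argument. The proof can therefore be presented in a few lines, emphasizing the substitution into Theorem~\ref{thm:ptc} and the identification of the normal rational curve with a Reed--Solomon generator matrix.
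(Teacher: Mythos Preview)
Your proposal is correct and follows essentially the same approach as the paper: the corollary is obtained by directly substituting the MDS parameters $c=q+1$, $d=q+2-2r$ into Theorem~\ref{thm:ptc}, and the identification with the construction of~\cite{BARNUM_2002} is made by noting that the columns of a generator matrix for such a code can be taken to be the points of a normal rational curve (the extended Reed--Solomon code). The paper treats this as an immediate consequence and gives even less detail than you do.
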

Note that a $[q+1,2r,q+2-2r]_q$ code is known to exist for all $2r\leq q$. One
can take as columns of $G$ the points of a normal rational
curve~\cite{Beut98,Hirs98,Casse_book}. This generates a code known as the
(extended) Reed-Solomon code~\cite{Mac77}. 

Another example of a linear code that provides
interesting parameters for the construction of a SPTC is the next
one:

\begin{corollary}\label{cor2} A LECC $[q^2+1,4,q^2-q]_q$, where $q=2^s$,
gives a SPTC being a collection of $q^2+1$ $[[2s,s]]$ stabilizer QECCs with error
$\epsilon=(q+1)/(q^2+1)$.
\end{corollary}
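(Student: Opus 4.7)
The plan is to obtain the claim as a direct specialization of Theorem~\ref{thm:ptc}, with the only nontrivial content being the existence of a LECC with the stated parameters $[q^2+1,4,q^2-q]_q$.

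First, I would substitute $c=q^2+1$, $2r=4$ (hence $r=2$) and $d=q^2-q$ into Theorem~\ref{thm:ptc}. This immediately yields a strong SPTC consisting of $c=q^2+1$ stabilizer QECCs, each with parameters $[[rs,rs-s]]=[[2s,2s-s]]=[[2s,s]]$. The error rate comes out to
\begin{equation}
\epsilon=1-\frac{d}{c}=1-\frac{q^2-q}{q^2+1}=\frac{q+1}{q^2+1},
\end{equation}
matching the statement. So the entire arithmetic part is a one-line plug-in.

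The remaining (and main) task is to exhibit a $[q^2+1,4,q^2-q]_q$ code over $GF(q)$ for every $q=2^s$. Here I would appeal to the classical construction from an ovoid of $PG(3,q)$, already alluded to in the text: an ovoid is a set of $q^2+1$ points in $PG(3,q)$ with no three collinear. Taking these points as the columns of a $4\times(q^2+1)$ matrix $G$ over $GF(q)$ (and noting that $2r=4$ matches the projective dimension plus one), I would use the standard fact that every plane of $PG(3,q)$ meets an ovoid in either $1$ or $q+1$ points. Since hyperplane intersections in $PG(3,q)$ correspond to codewords in the dual picture, the maximum weight of a hyperplane section equals $q+1$, so the minimum distance of the resulting code is $d=(q^2+1)-(q+1)=q^2-q$, while the length is $c=q^2+1$ and the dimension is $2r=4$, as required.

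The main obstacle is really just invoking the existence of an ovoid in $PG(3,q)$ for $q=2^s$. For odd $q$ the elliptic quadric supplies one, and for even $q$ one has the Tits ovoid when $s$ is odd and again elliptic quadrics otherwise; in all cases existence is known and classical, so I would simply cite the projective-geometry references already used in the paper (\cite{Beut98,Hirs98,Casse_book}) rather than reprove it. Once the code exists, the rest is Theorem~\ref{thm:ptc}.
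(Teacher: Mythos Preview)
Your proposal is correct and follows essentially the same route as the paper: both invoke Theorem~\ref{thm:ptc} for the SPTC parameters and then justify the existence of the $[q^2+1,4,q^2-q]_q$ code by taking the $q^2+1$ points of an ovoid in $PG(3,q)$ as the columns of the generator matrix, using the fact that every plane meets an ovoid in $1$ or $q+1$ points to read off the minimum distance. A minor remark: since the corollary assumes $q=2^s$, the case distinction by parity of $q$ is unnecessary---elliptic quadrics furnish ovoids in $PG(3,q)$ for every prime power $q$, so a single citation suffices (the paper uses~\cite{Dembowski}).
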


This linear code can be obtained from an ovoid of $PG(3,q)$.
That is, a set of $q^2+1$ points in $PG(3,q)$ with no three of them collinear.
Note that since every plane ($=$ hyperplane) of $PG(3,q)$ intersects an ovoid in
either $1$ or $q+1$ points, then the parameters of the linear code corresponding
to the ovoid are as claimed $[q^2+1,4,q^2-q]_q$~\cite{Dembowski}.

In the design of every SPTC there is always a trade-off between the number of
stabilizer QECCs within the set and the value of the error probability,
$\epsilon$, of the SPTC. In the next section, when we discuss some applications
of PTPs, we will see that the optimal choice for these two parameters depends on
the particular application. For instance, in error detection (error
purification) protocols the number of quantum codes is just translated into some
classical communication, while the parameter $\epsilon$ is the probability to
detect an error by the protocol (i.e., the fidelity of the resulting state with
respect to some maximally entangled state). In this case, therefore, it could be
important to keep $\epsilon$ as small as possible, even when the number of
quantum codes to achieve this is large \cite{Ambainis02}. In quantum message
authentication, on the contrary, the number of quantum codes in the PTP is
related with the length of the secret key (expensive resource) needed in the
protocol, while the value of $\epsilon$ measures the probability of Eve to
tamper on the line. Thus, in this scenario the decision about the value of these
two parameters depends on the security requirements~\cite{BARNUM_2002}.





\section{Applications}\label{Appl}

In this section we analyze the implications of our results in error detection,
entanglement purification and the quantum message authentication.

\subsection{Error detection}

The theory of quantum error correction was a fundamental breakthrough for
quantum information processing to become a potential feasible technology
\cite{Shor95,Steane96}. Since its conception there have been many promising
advances~\cite{QEC1,QEC2,QEC3} that might pave the way towards fault-tolerant
quantum computation but for this, a quite large number of physical qubits and
good quality gates are required~\cite{Surf_code_perspective}. However, depending
on the application needs, quantum error correction is not the only way to deal
with the possible errors that can affect a quantum system. Another interesting
approach that is particularly useful for quantum communication is that of error
detection~\cite{BARNUM_2002,Gisin04,err_det_recent}. In contrast to error
correction, where the aim is to actively correct the errors that occur during
the transmission of information, in error detection the goal is less stringent,
i.e., it is just to detect with high probability if an error has occurred. In
this last case the protocol simply discards the signal. That is, a quantum error
is transformed into an erasure. We remark that in the noisy intermediate-scale
quantum (NISQ) era, error mitigation~\cite{err_mit} is also a popular approach
to deal with errors but its main goal is to obtain the correct expectation
values of different observables even if an error has occurred in the quantum
state.  

As introduced in Section~\ref{PTP}, a PTP is basically an error detection
protocol that allows two parties to check with high probability whether or not
an error has affected the transmission of a maximally entangled state. Making
use of quantum teleportation~\cite{TeL}, therefore, one can convert PTPs into
general error detection protocols as described below. This is a different
approach to the one in~\cite{Gisin04}, where a quantum state is sent directly
to the receiver and later on it is decided whether an error has occurred and the
received state must be discarded or no error is detected and the state is
accepted. 

Here, instead, one can detect {\it a priori} the potential errors that could
have occurred during the transmission of the quantum state, without the need to
let the state be corrupted by such errors. This is particularly relevant for
quantum information since it cannot be copied. To do this, the sender uses a PTP
to send EPR pairs and check together with the receiver if an error affected the
communication process. If no error is detected then the resulting EPR pairs can
be used to teleport the desired quantum state. Otherwise, the distributed EPR
pairs are discarded. This means that the transmitter can
keep the information state and run the PTP protocol until no error is
detected and the state can be transmitted without errors. Importantly as well,
PTPs work without any conjecture about the actual error model of the channel. 

A $[c,2r,d]_q$ LECC, with $q=2^s$ provides us with the
following parameters via Theorem~\ref{Maintheorem} for the error detection
protocol. By performing the LOCC quantum operation $\mathcal{O}$ prescribed in
Definition~\ref{Def:PTP} on $2n=2rs$ qubits, the parties obtain a quantum state
of $2m=2(rs-s)$ qubits with its fidelity being at least $F=1-\epsilon=d/c$
compared to the state of $m=rs-s$ EPR pairs. This is also the resulting
fidelity of the teleported $m$ qubit message compared to the $m$ qubit state
that the parties wish to send, if one assumes an ideal teleportation process.

Let us derive the parameters $n$, $m$ and $F$ for the LECCs in
Corollaries~\ref{cor1} and~\ref{cor2}. For the linear code in
Corollary~\ref{cor1} we have $n=rs$, $m=rs-s$ and $F=(2^s+2-2r)/(2^s+1)$. And
for the linear code in Corollary~\ref{cor2} we have $n=2s$, $m=s$ and
$F=(2^{2s}-2^s)/(2^{2s}+1)$. Note that, as it has already been mentioned before,
since the purpose here is just to detect possible errors the parties do not need
to pre-share a secret key. Therefore, we can in principle set the number of
stabilizer codes ($2^s+1$ and $2^{2s}+1$ for Corollaries~\ref{cor1}
and~\ref{cor2}, respectively) as high as we wish, which means that one can
obtain fidelities arbitrarily close to 1.

We note that in~\cite{Gisin04} the authors consider an alternative method for
transferring EPR pairs between two distant parties. Precisely, to achieve a high
fidelity for the desired number of EPR pairs, one party generates a larger
number of EPR pairs than what they wish to share at the end. Then half of each
EPR pair is sent to the other party and the error detection protocol is run. If
the protocol succeeds the parties obtain the desired number of EPR pairs with
higher fidelity than what they would have obtained if they omit the error
detection protocol. A drawback of this approach is that~\cite{Gisin04} only
considers phase-shift errors. In contrast, as already mentioned,
an error detection method based on PTPs does not
require any assumption on the errors occurring in the channel.

\subsection{Entanglement purification}

Purification or distillation of
entanglement~\cite{Ben96a,Ben96,Ben96b,HorodeckiEntanglement,distLuo} is an
important operation for many quantum communication protocols. Since typical
channels are noisy, Alice and Bob usually end up with mixed entangled states,
which must then be distilled into pure ones via LOCC to make them useful for the
envisaged scheme. In general, the goal is to obtain $m$ ``high quality'' EPR
pairs from $n\geq m$ noisy ones.

The case where Alice and Bob share identical copies of the same state or,
equivalently, where the noise acts independently on each signal was addressed
in~\cite{Ben96a} for particular pure two-qubit pure entangled states and
in~\cite{Ben96,Ben96b,Horodecki97} for general mixed entangled states. The
assumption of an independent error model is justified in many communication
scenarios from technological considerations. However, there are also situations,
particularly in the cryptographic context, where the action of the channel is
controlled by Eve and in principle such an error model is not valid
anymore~\cite{Lo99,Shor00,BARNUM_2002}.

Ambainis {\it et al.}~\cite{Ambainis02} studied general entanglement
purification protocols (GEPP) within a broader error model than the one
considered in the results above, and where the previous techniques do not appear
to work. These authors no longer assume that there is a single ``distortion''
operator that acts independently on each qubit pair, neither they assume that
Alice and Bob have complete information about the distortion. The only
assumption they make is that such distortion is not very large. More
precisely, they consider that Alice and Bob share a state $\rho$ with fidelity at
least $1-\varepsilon$ with respect to a pre-defined maximally entangled state
$|\Phi^{+}\rangle^{\otimes{}n}$. Although their proposal works for arbitrary
maximally entangled states, for simplicity we restrict ourselves here to the
case of EPR pairs. 

In this context, ref.~\cite{Ambainis02} shows that is not possible to devise
absolutely successful GEPPs, that is, protocols that never fail and output a
high fidelity state. Or, to put it in other words, with these stringent requirements the parties cannot
increase the fidelity of their initial quantum state arbitrarily. However, one
can construct conditionally successful (CS)-GEPPs, in which we allow the
protocol to fail with a small probability but when it succeeds, it outputs
a quantum state with high expected fidelity (close to 1). One can also construct
so-called deterministic conditionally successful (DCS)-GEPPs, which, conditioned
on succeeding, output a high fidelity state with probability 1. The difference
between CS- and DCS-GEPPs is very subtle. We will come back to this after
stating their precise definition.

To see how our geometrical construction affects the parameters
of such CS- and DCS-GEPPs, let us recall the notation and general setting
from~\cite{Ambainis02} briefly. 

Alice and Bob possess a state in
$\mathcal{H}^A_N\otimes\mathcal{H}^B_N$ with $N$ being the dimension of the
Hilbert-spaces. Let $\ket{\Psi_N}_{AB}\in \mathcal{H}^A_N\otimes\mathcal{H}^B_N$ be
defined as
\begin{equation}
\ket{\Psi_N}_{AB}=\frac{1}{\sqrt{N}}\sum_{i=0}^{N-1}\ket{i}_A\ket{i}_B.
\end{equation}
This is a maximally entangled state in $\mathcal{H}^A_N\otimes\mathcal{H}^B_N$
as $\ket{i}_A (\ket{i}_B)$ is an orthonormal basis in $\mathcal{H}^A_N
(\mathcal{H}^B_N)$. If the dimension $N=2^n$, then the state
$\ket{\Psi_N}_{AB}$ is the state of $n$ EPR pairs. In~\cite{Ambainis02} Alice
and Bob can also be given an auxiliary input $\ket{\Psi_K}_{AB}\in
\mathcal{H}^A_K\otimes\mathcal{H}^B_K$ with dimension $K=2^k$ corresponding to
the state of $k$ EPR pairs. They utilize this state for encrypting the
classically communicated bits with a one-time pad by first distilling a secret
key from these extra $k$ EPR pairs. In the
following examples for the CS-GEPP we take $k=0$ (corresponding to
dimension $K=1$), which means that the parties do not possess extra perfect EPR
pairs in this case. The symbol $\mathcal{P}$ denotes protocols for
extracting entanglement by LOCC operations. At the end of $\mathcal{P}$, two
scenarios are possible:
\begin{enumerate}
  \item Alice and Bob abort and claim failure by outputting a special symbol
FAIL. This is denoted by $\mathcal{P}(\rho) = {\rm FAIL}$, where $\rho$ is their input
state.
  \item They output a (possibly mixed) state $\sigma\in
  \mathcal{H}^A_M\otimes\mathcal{H}^B_M$. This is denoted by $\mathcal{P}(\rho)
  = \sigma$. If $M=2^m$, then this state is
  close to $m$ EPR pairs.
\end{enumerate} 

\begin{definition*}[\cite{Ambainis02}]\label{Def:CS} A general entanglement
purification protocol ${\cal P}$ is conditionally successful (CS) with
parameters $\langle{}N,K,M,\varepsilon,\delta,p\rangle$ if for all input states
$\rho$ such that $F(\rho)=1-\varepsilon$, we have $Pr[{\cal
P}(\rho)={\rm FAIL}]\leq{}p$ and

\begin{equation}
\mathbb{E}_{{\cal P}}[F({\cal P}(\rho))|{\cal
P}(\rho)\neq{}{\rm FAIL}]\geq{}1-\delta,
\end{equation}

\noindent where $\mathbb{E}_{{\cal P}}$ denotes the expectation taken over the
classical communication in the protocol ${\cal P}$.
\end{definition*}\noindent By the fidelity of a state $\rho\in\mathcal{H}^A_N\otimes\mathcal{H}^B_N$ we
mean:
\begin{equation}
F(\rho)=\bra{\Psi_N}\rho\ket{\Psi_N}_{AB}.
\end{equation}If 
$\sigma\in\mathcal{H}^A_M\otimes\mathcal{H}^B_M$ then
$F(\sigma)=\bra{\Psi_M}\sigma\ket{\Psi_M}.$

Note that the parameters $p$ and $\delta$ depend on $\varepsilon$, $N$ and $M$,
however, to be consistent with the notation in~\cite{Ambainis02} we suppress
this dependence. Also note that in the definition above, the requirement is only
that the fidelity averaged over all the possible classical communication
scenarios between the parties should be high when the protocol succeeds. A
successful protocol can be obtained by conducting the classical communication
between the parties in many different ways and Definition~\ref{Def:CS} takes the
average of the fidelities of the resulting states in all these different
classical communication scenarios. However, as described in~\cite{Ambainis02} it
is possible (with small probability) that the actual fidelity is much lower even
if the protocol succeeds. This is because a possible eavesdropper who sees all
the classical communication can use this information to attack Alice and Bob
since she can know the fidelity of the accepted state. 

To address this adversarial setting one can use a stronger definition (DCS) that
requires that, in the case of success, regardless of the classical messages
interchanged between Alice and Bob, they obtain a high fidelity state. For this,
the key idea is to simply encrypt the classical communication. In doing so, Eve
cannot infer information about the fidelity of the resulting state. 

\begin{definition*}[\cite{Ambainis02}]\label{Def:DCS} A general
entanglement purification protocol ${\cal P}$ is deterministically conditionally
successful (DCS) with parameters $\langle{}N,K,M,\varepsilon,\delta,p\rangle$ if
for all input states $\rho$ such that $F(\rho)=1-\varepsilon$, we have $Pr[{\cal
P}(\rho)={\rm FAIL}]\leq{}p$ and

\begin{equation}
Pr\big[F({\cal P}(\rho))\geq{}1-\delta{}\,|\,{\cal P}(\rho)\neq{}{\rm FAIL}\big]=1.
\end{equation}
\end{definition*}

We remark that a DCS-GEPP is also a CS-GEPP and one can construct a DCS-GEPP
from a CS-GEPP with the help of additional EPR pairs, used to encrypt the classical
communication with a one-time pad.

\begin{fact*}[\cite{Ambainis02}]\label{claim1} A CS-GEPP protocol with
parameters $\langle{}N,K,M,\varepsilon,\delta,p\rangle$ which uses $c$ bits of
communication can be converted to a DCS-GEPP protocol with parameters
$\langle{}N,2^{c}K,M,\varepsilon,\delta,p\rangle$.
\end{fact*}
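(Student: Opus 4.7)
The plan is to turn the CS protocol into a DCS one by encrypting its public classical communication with a one-time pad obtained from the $c$ additional EPR pairs (which enlarge the auxiliary dimension from $K$ to $2^cK$). First I would have Alice and Bob measure each extra EPR pair in the computational basis; perfect correlations in $\ket{\Phi^+}$ give them a shared uniformly random string $r\in\{0,1\}^c$, independent of the input state and of all other randomness in the protocol. They then execute the original CS-GEPP unchanged, except that every public bit $b_i$ is replaced by the ciphertext $b_i\oplus r_i$ and decrypted on arrival with the corresponding bit of $r$.

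Internally nothing changes, so the decrypted transcript $T$, the FAIL flag (a function of $T$), and the per-transcript output states $\sigma_t$ all have the same joint distribution as in the CS protocol. This immediately preserves $Pr[\mathcal{P}(\rho)=\mathrm{FAIL}]\leq p$ as well as $N$, $M$, $\varepsilon$, and $\delta$; only the auxiliary dimension changes, from $K$ to $2^cK$. What changes externally is that the publicly visible transcript is now $T'=T\oplus r$.

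The nontrivial step is verifying the DCS soundness condition with respect to this visible transcript. Independence and uniformity of $r$ give the standard one-time-pad identity
\begin{equation}
Pr[T=t\mid T'=t',\mathcal{P}(\rho)\neq\mathrm{FAIL}]=Pr[T=t\mid\mathcal{P}(\rho)\neq\mathrm{FAIL}],
\end{equation}
so the output state conditioned on any visible $T'=t'$ and on success is the $t'$-independent mixture
\begin{equation}
\sigma^{\mathrm{avg}}=\sum_{t\in\mathrm{succ}}Pr[T=t\mid\mathcal{P}(\rho)\neq\mathrm{FAIL}]\,\sigma_t.
\end{equation}
Since $\ket{\Psi_M}$ is pure, $F(\sigma)=\bra{\Psi_M}\sigma\ket{\Psi_M}$ is a linear functional of $\sigma$, hence
\begin{equation}
F(\sigma^{\mathrm{avg}})=\mathbb{E}[F(\sigma_T)\mid\mathcal{P}(\rho)\neq\mathrm{FAIL}]\geq 1-\delta
\end{equation}
by the CS hypothesis. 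Thus every visible transcript of positive probability yields, conditioned on success, a state of fidelity at least $1-\delta$ with probability one, which is precisely the DCS condition with parameters $\langle N,2^cK,M,\varepsilon,\delta,p\rangle$. I expect the only real obstacle to be conceptual, namely recognizing that the ``classical communication'' in the DCS definition should refer to what is exposed externally (the ciphertexts $T'$), so that the relevant conditional state is the $r$-averaged mixture $\sigma^{\mathrm{avg}}$ rather than a specific $\sigma_t$ that the parties may compute internally.
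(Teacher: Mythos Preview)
Your proposal is correct and follows exactly the approach the paper indicates: the paper does not give a detailed proof of this Fact but only remarks, just before stating it, that ``one can construct a DCS-GEPP from a CS-GEPP with the help of additional EPR pairs, used to encrypt the classical communication with a one-time pad,'' and then cites~\cite{Ambainis02}. Your argument is precisely a careful unpacking of this one-time-pad idea, including the key observation that fidelity with a pure target state is linear so that $F(\sigma^{\mathrm{avg}})$ equals the CS expected fidelity; the conceptual caveat you flag about interpreting the DCS condition relative to the externally visible transcript is exactly the point the paper makes when motivating the DCS definition in the adversarial setting.
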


Ref.~\cite{Ambainis02} shows that a PTP with error
$\epsilon$ naturally gives a GEPP since one can just run the PTP, outputting FAIL
whenever the PTP rejects the input. Considering that we have chosen $N = 2^n$,
$M = 2^m$ and $k=0$, we obtain a CS-GEPP with parameters 
\begin{equation}\label{CSRes}
\left\langle2^n, 1, 2^{n-s},\varepsilon,\frac{\epsilon}{1-\varepsilon}, \varepsilon\right\rangle,
\end{equation}
for any $s\in\lbrace 1,\dots,n \rbrace$. This means that Alice and Bob start
with a state with fidelity $1-\varepsilon$ compared to $n$ EPR pairs and at the
end of the process $\mathcal{P}$ they obtain a state that has on average (in the
sense of Definition~\ref{Def:CS}) a fidelity of $1-\epsilon/(1-\varepsilon)$
with respect to the state of $n-s$ EPR pairs. 

By Theorem~\ref{Maintheorem}, a LECC
$[c,2r,d]_{q}$, with $q=2^s$ gives an SPTC and thus a PTP with error
$\epsilon=1-d/c$, which maps $2rs$ qubits (half held by Alice and half held by
Bob) to $2(rs-s)+1$ qubits. This PTP requires a classical communication of
$b=\ceil{\log_2(c)}+s$ bits, since the parties need to choose a stabilizer code
randomly out of the $c$ stabilizers possible and communicate which one they are
using. Moreover, they also need to compare the syndrome of the chosen
stabilizer, which is an $s$-bit string.

Let us apply the LECC $[q+1,2r,q+2-2r]_q$, with $q=2^s$
in Corollary~\ref{cor1} to Eq.~\eqref{CSRes}. This is the
scenario considered in~\cite{Ambainis02}. This means that $n=rs$
and $\epsilon=(2r-1)/(2^s+1)$ thus we have a CS-GEPP with the following
parameters
\begin{equation}\label{CSRes1}
\left\langle2^{rs}, 1, 2^{rs-s},\varepsilon,\frac{2r-1}{(1-\varepsilon)(2^s+1)}, \varepsilon\right\rangle.
\end{equation}
In this case Alice and Bob need to communicate
$c=\ceil{\log_2(2^s+1)}+s=2s+1$ classical bits. Therefore, by Fact~\ref{claim1},
we also obtain a DCS-GEPP with the following parameters
\begin{equation}\label{DCSRes1}
\left\langle2^{rs}, 2^{2s+1}, 2^{rs-s},\varepsilon,\frac{2r-1}{(1-\varepsilon)(2^s+1)}, \varepsilon\right\rangle.
\end{equation}

Now, let us consider the LECC $[q^2+1,4,q^2-q]_q$, with
$q=2^s$ from Corollary~\ref{cor2}. Similarly, as before, this
means that $n=2s$ and $\epsilon=(2^s+1)/(2^{2s}+1)$ and thus we have a
CS-GEPP with the parameters
\begin{equation}\label{CSRes2}
\left\langle2^{2s}, 1, 2^{s},\varepsilon,\frac{2^s+1}{(1-\varepsilon)(2^{2s}+1)}, \varepsilon\right\rangle.
\end{equation}
In this case Alice and Bob need to communicate
$b=\ceil{\log_2(2^{2s}+1)}+s=3s+1$ classical bits. Therefore, by
Fact~\ref{claim1}, we also obtain a DCS-GEPP with the following parameters
\begin{equation}\label{DCSRes2}
\left\langle2^{2s}, 2^{3s+1}, 2^{s},\varepsilon,\frac{2^s+1}{(1-\varepsilon)(2^{2s}+1)}, \varepsilon\right\rangle.
\end{equation}

We now compare the two constructions for the CS-GEPP and DCS-GEPP given by
Eqs.~\eqref{CSRes1}-\eqref{DCSRes2}. For this to be a fair comparison we require
$n$ and $m$ to be the same in both cases, which only happens if we set $r=2$,
which implies that $n=2s$ and $m=s$. We also fix the value of the initial
fidelity $1-\varepsilon$ for all the four cases. The results are shown in
Fig.~\ref{fig:GEPPComparison}. 

\begin{figure}[H]
  \centering
  \subfigure[]{\includegraphics[scale=0.5]{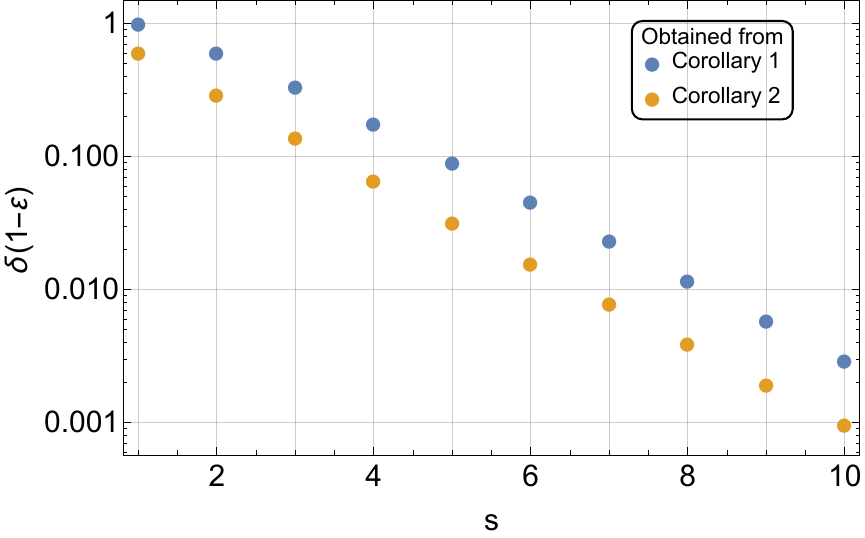}}
  \label{fig:GEPPComparisona}
  \subfigure[]{\includegraphics[scale=0.5]{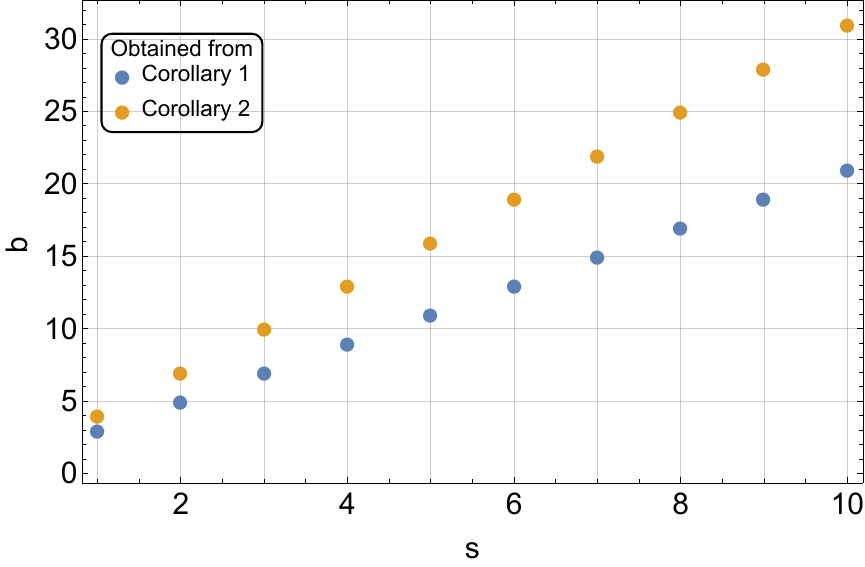}}
  \label{fig:GEPPComparisonb} 
    \caption{(a) The quantity $\delta(1-\varepsilon)$ as a function of $s$,
    where $\delta$ is the deviation from 1 in the fidelity of the output
    state in the case of acceptance for the CS-GEPP (lower
    $\delta(1-\varepsilon)$ values mean higher fidelity). The quantity
    $(1-\varepsilon)$ is the fidelity of the initial state, which is fixed for both
    Corollaries. (b) The number of classically communicated bits $b$ as a function
    of $s$. Note that $b$ is also the number of the auxiliary perfect EPR pairs
    required when transforming the CS-GEPP to DCS-GEPP via Fact~\ref{claim1}. }
    \label{fig:GEPPComparison}
  \end{figure}

Precisely, in
Fig.~\ref{fig:GEPPComparison}a we see that Corollary~\ref{cor2} performs better
in terms of average fidelity of the final state in the case of the CS-GEPP. When
converting the corresponding CS-GEPPs to DCS-GEPPs with Fact~\ref{claim1} the
final fidelity stays the same, but the DCS-GEPP obtained from
Corollary~\ref{cor2} requires more auxiliary perfect EPR pairs (see
Fig.~\ref{fig:GEPPComparison}b) than the DCS-GEPP obtained from
Corollary~\ref{cor1}. Therefore, when designing a CS-GEPP Corollary~\ref{cor2}
provides better results. However, when constructing a DCS-GEPP the best choice
depends on the application since the Corollary providing better final fidelity
necessitates a higher number of perfect EPR pairs.

\subsection{Authentication of quantum messages}

A quantum authentication scheme (QAS) is a cryptographic protocol in which Alice
wishes to send a quantum state to Bob in a way that he can be sure that the
state has arrived unaltered from Alice. For such a task, Alice and Bob
need a set of keyed encoding and decoding maps. Alice chooses and applies such a
map to the state she wishes to send based on the secret key that she shares
with Bob. Bob then applies the corresponding decoding map to the received state
and rejects or accepts the resulting quantum state as authentic based on the
state of a flag qubit. Let us now recall the formal definition
from~\cite{BARNUM_2002,reuse}:

\begin{definition*}[\cite{BARNUM_2002,reuse}]\label{secParam} A
non-interactive QAS with error $\epsilon$ is a set of classical keys
$\mathcal{K}$ such that for $\forall k \in \mathcal{K}$, $\exists A_k, B_k$
computable trace preserving completely positive (TPCP) maps. The TPCP map $A_k$
takes an $m$-qubit message state $\rho$, adds $s$ auxiliary qubits and outputs a
system $\rho_k$ of $m+s$ qubits. $B_k$ takes the (possibly altered) state
$\rho'_k$ as input and outputs two quantum systems, an $m$-qubit state $\rho'$
and a single-qubit state with basis states $\ket{{\rm ACC}}$ and $\ket{{\rm REJ}}$
indicating acceptance or rejection. Moreover, for all message states
$\ket{\psi}$ we have:
\begin{itemize}
\item Completeness: $\forall k \in \mathcal{K}$:
$B_k(A_k(\ketbra{\psi}{\psi}))=\ketbra{\psi}{\psi}\otimes \ketbra{{\rm ACC}}{{\rm ACC}}$	
\item Soundness: $\Tr(P\rho')\geq 1-\epsilon$, where
$P=\ketbra{\psi}{\psi}\otimes \ketbra{{\rm ACC}}{{\rm ACC}}+\mathds{1}_m\otimes
\ketbra{{\rm REJ}}{{\rm REJ}}$, with $\mathds{1}_m$
being the identity operator on the message space.
\end{itemize}
\end{definition*}

We note that similarly to the definition of PTPs, for a QAS to be considered
secure we only require that the protocol accepts an altered state with a tiny
probability $\epsilon$. It is also important to note that in this definition
only pure states are considered as messages. This is due to the fact that
in~\cite{BARNUM_2002} the authors only demonstrate the security of the non-interactive
QAS in Protocol~\ref{prot1} (see below) for pure message states. They prove that
by transforming Protocol~\ref{prot1} into an interactive QAS that achieves
quantum teleportation~\cite{BARNUM_2002}. Then, by linearity, one expects that
the security extends trivially for mixed state messages as well. But this
statement has only been proven rigorously in~\cite{HLM}, where the authors show
that the non-interactive QAS in Protocol~\ref{prot1} stays secure under the more
general definition of security where mixed state messages are also allowed. This
can be accounted for in Definition~\ref{secParam} by regarding $\ket{\psi}$ as a
purification of the mixed state message. This means that formally
Definition~\ref{secParam} is still adequate as a security definition. 

There have been many proposals for non-interactive QASs, like e.g. the Clifford
code~\cite{Aharonov,TrapAndClifford} and the trap
code~\cite{trap,TrapAndClifford}. In the case of the Clifford code based QAS
Alice appends a certain number of qubits prepared in the state $\ket{0}$ to her
message and then applies a certain, random Clifford operation~\cite{Aharonov}
corresponding to a secret key shared with Bob. On the receiving side, Bob
performs the inverse Clifford operation, measures the state of the appended
qubits and accepts the state if he finds that they are in the $\ket{0}$ state.
On the other hand, the trap code based QAS requires that Alice first encodes one
qubit with a quantum error correcting code with distance $d$ into $n$ qubits.
Then she appends $n$ qubits in the state $\ket{0}$ and $n$ qubits in the
$\ket{+}$ state. After this, she applies a random permutation on the $3n$
qubits, which is indexed by a shared secret key. Finally, she encrypts the
resulting state using the quantum one-time pad~\cite{qotp}, also using a part of
the shared secret key. Bob performs the inverse of the applied quantum one-time
pad and the inverse of the permutation and accepts the message if the last $2n$
qubits are in their initial state. In such case he decodes the first $n$ qubits
according to the agreed quantum error correcting code. 

All the above-mentioned schemes have security parameters (see
Definition~\ref{secParam}) that are exponentially small in some tunable
parameter of the corresponding scheme. As we have noted before, for all QASs it
is necessary that the parties pre-share a secret key. This is an expensive
resource, therefore when comparing QASs it is not enough to just compare the
security parameter but we have to evaluate the number of necessary secret key
bits and also the number of auxiliary qubits. In this regard, the trap and
Clifford codes have certain drawbacks. For example with the trap code we have to
start with one message qubit and use a large number of auxiliary qubits. In the
Clifford code, on the other hand, we have to store the index of specific Clifford operations as a
shared secret key which requires large number of secret key bits. Therefore,
below we focus on the QAS based on SPTCs, considered in~\cite{BARNUM_2002}
since, besides having a security parameter that can also be made exponentially
small, it gives a better flexibility in tuning the security parameter and the
number of shared secret bits required. This gives us more freedom in finding the optimal construction for our
purposes.

The previously mentioned non-interactive QAS~\cite{BARNUM_2002}
obtained from SPTCs runs as follows.   

\begin{protocol*}[\cite{BARNUM_2002}]\label{prot1}{\textcolor{white}{.}\\
1. Alice and Bob share a secret key $x$ of length $2m$ to be used for
q-encryption (using the quantum one-time pad~\cite{qotp}). For authentication, they
additionally agree on a SPTC $\{Q_k\}$ and two secret binary strings $k$ and
$y$. \\\\
2. Alice q-encrypts the message state $\rho$ of $m$ qubits as
$\rho_e=\sigma_x^{x_1}\sigma_z^{x_2}\rho
\sigma_z^{x_2}\sigma_x^{x_1}$, where $x_1$ ($x_2$) is the
first (second) half of the $2m$-bit string $x$. Next, Alice encodes $\rho_e$
according to $Q_k$ in a way that the syndrome is $y$ (there
is freedom in choosing the logical basis states in the stabilizer space
therefore one can choose the syndrome) to produce $\sigma$. Since $Q_k$ encodes
$m$ qubits into $n=m+s$ qubits $\sigma$ is a quantum state of $m+s$ qubits.
Finally, Alice sends the result to Bob. \\\\
3. Bob receives the $n=m+s$ qubits. Denote the received state by $\sigma'$. He
measures the syndrome $y'$ of the code $Q_k$ on his qubits. Bob compares $y$ to
$y'$, and aborts if any error is detected. Otherwise, he decodes his $n$-qubit
word according to $Q_k$, obtaining $\rho_e'$. Bob q-decrypts $\rho_e'$ using $x$
and obtains $\rho'$.}\end{protocol*}

We remark that in contrast to the classical case, encryption is required for the
authentication of quantum messages (this is labeled as q-encrypt in step 2 of
Protocol~\ref{prot1}), as it has been shown in~\cite{BARNUM_2002}. Moreover,
encoding with a stabilizer is required so that it is possible to detect errors
on the quantum state. 

Note that the security parameter $\epsilon$ of Protocol~\ref{prot1} coincides
with the error of the SPTC used for its construction. Let us denote the
length of a binary string $g$ by $|g|$. With this notation, the required length
of the shared secret key for Protocol~\ref{prot1} is
$l=2m+|k|+|y|=2m+\ceil{\log_2(K)}+s$, where $K$ is the number of stabilizer codes in the SPTC and $|y|=s$,
since this is the number of bits in the syndrome of each stabilizer code in the
SPTC. 

By the general results presented in Section~\ref{Geo_without_barnum}, a
LECC $[c,2r,d]_{q}$, with $q=2^s$ gives a SPTC and thus
a non-interactive QAS with an error $\epsilon=1-d/c.$ Moreover, it
requires a shared secret key of length $l=2(rs-s)+\ceil{\log_2(c)}+s$ and
encodes an $rs-s$ qubit message into an $rs$ qubit state. This means that
one can only obtain useful QASs if $r>1,$ otherwise the message consists of
zero qubits.

In the case of Corollary~\ref{cor1}, it provides a
QAS with an error:
\begin{equation}
\epsilon=\frac{2r-1}{2^s+1},
\end{equation}
and the necessary key length is
\begin{equation}
l=2(rs-s)+\ceil{\log_2(2^s+1)}+s=2rs+1.
\end{equation}
By choosing $r$ and $s$ we can tailor the QAS protocol as desired. This example
has been considered in~\cite{BARNUM_2002}. 


In the case of Corollary~\ref{cor2}, it provides a
QAS with an error:
\begin{equation}
\epsilon=\frac{2^{s}+1}{2^{2s}+1},
\end{equation}
and the necessary key length is
\begin{align}
l&=2(2s-s)+\ceil{\log_2(2^{2s}+1)}+s=\nonumber\\
&=3s+\ceil{\log_2(2^{2s}+1)}=5s+1.
\end{align}
Note that
increasing $s$ makes the error smaller and increases the
required length of the shared secret key linearly. 

To compare how the two different families of LECCs from
Corollaries~\ref{cor1} and~\ref{cor2} perform in the QAS given by
Protocol~\ref{prot1} we consider the practical tasks of authenticating a quantum
message consisting of $\sim 10^5$ and $\sim 10^2$ qubits. As mentioned before,
due to Theorem~\ref{thm:ptc}, a LECC $[c,2r,d]_q$ with $q=2^s$ and $c$,
$r$, $d$ and $s$ being positive integers gives an SPTC consisting of stabilizers
encoding an integer number of qubits ($rs-s$) into $rs$ qubits. This means that
we have to perform Protocol~\ref{prot1} in blocks of size $rs-s$. This is the
reason why we say that there are approximately $10^5$ ($10^2$) qubits in the
message as to be able to use the same SPTC for all the blocks, we cannot have
remainder qubits. Alternatively, one could also complete the last block with,
say, $\ket{0}$ states. Therefore, we will select the total number of qubits in
the message to be divisible by $rs-s$. This way we can perform
Protocol~\ref{prot1} via the SPTC that the code $[c,2r,d]_q$ gives in all the
$\sim 10^5/(rs-s)$ and $\sim 10^2/(rs-s)$ blocks. So if we consider the first
case and take a specific LECC $[c,2r,d]_q$ with $q=2^s$ then we have
$B=\floor{10^5/(rs-s)}$ blocks with $\epsilon=1-d/c$ being the error of the QAS
for each block. This means that $sB$ is the number of necessary auxiliary qubits
and $B\big[2(rs-s)+\ceil{\log_2(c)}+s\big]$ is the number of secret key bits necessary
for the quantum authentication task for the whole quantum message. Due to the
fact that we have to deal with possibly many blocks we introduce a new
quantity, namely an upper bound $\epsilon_{\rm{total}}$ on the probability of
having an error in at least one of the blocks. More precisely this is an upper
bound on the probability that the whole message is accepted as authentic
but actually there is at least one error. This quantity can be calculated
as\begin{equation} \epsilon_{\rm{total}}=1-(1-\epsilon)^{B}, \end{equation}
since $\epsilon$ is the error in each of the $B$ blocks. The goal is to keep
$\epsilon_{\rm{total}}$ small. 
For this, $\epsilon$ has to be small, which requires the distance $d$ of the
code to be as close to $c$ as possible. This means that for a given $c$ and $r$
the smallest total error is achieved by codes that saturate the Singleton bound
($d=c-2r+1$). We also remark that to have small $\epsilon$ we have to choose $s$
to be relatively large (i.e., we have to go beyond binary codes). Intuitively,
this is because $s$ is the difference between the number of qubits that we
encode and the number of qubits that we encode into for each stabilizer in the
SPTC. Therefore, a larger $s$ gives a larger difference in the dimensions of the
Hilbert spaces, which means we have more possibilities for choosing different
stabilizers for the SPTC. In this way, it will be sensitive for more Pauli
errors. We take the two families of codes from Corollaries~\ref{cor1}
and~\ref{cor2} with different parameters and summarize the relevant quantities
from above in Tables~\ref{tab:100000} and~\ref{tab:100}. We expect
Corollary~\ref{cor1} to perform better for more parameters since it saturates
the Singleton bound and, as opposed to Corollary~\ref{cor2}, the dimension of
the code can be adjusted independently of the parameter $s$.

The results for the $\sim 10^5$ qubit case can be seen in
Table~\ref{tab:100000}. Since in this scenario the number of qubits in the
message is relatively large, in order to keep $\epsilon_{\rm{total}}$ small we
have to choose the error in each block to be tiny and the number of blocks low
(i.e., the block length large). From Table~\ref{tab:100000}, we see that
Corollary~\ref{cor2} can perform better in terms of error when considering error
correcting codes over the same order field $GF(2^s)$. However, this comes at the
price of other parameters being worse compared to Corollary~\ref{cor1}.
Corollary~\ref{cor1}, on the other hand, gives more freedom in setting the
relevant parameters of the QAS since with increasing $r$ we can increase the
block size without increasing the order of the underlying field. This results in
a lower number of auxiliary qubits and a lower number of secret key bits but a
slightly higher error rate. Moreover, the number of blocks also decreases,
therefore the value of $\epsilon_{\rm{total}}$ does not necessarily get worse in
this case. However, with increasing $r$ it becomes more difficult to set the
number of message qubits to $\sim 10^{5}$. Upon designing the QAS one has to
decide which properties are more important and choose the underlying LECC
accordingly. 

The results for the case of a quantum message consisting of a much lower
number of qubits (i.e., $\sim 10^2$ qubits) can be found in
Table~\ref{tab:100}. The tendencies that we have observed regarding
Table~\ref{tab:100000} stay valid in this case as well. Here we do not have a
large number of qubits in the message so $s$ ($2^s$ is the order of the field
used) does not need to be as large as for the case in Table~\ref{tab:100000} to
have a relatively small $\epsilon_{\rm{total}}$. The only constraint is
that the number of message qubits has to be close to a multiple of the size of
the blocks $(rs-s)$, therefore, depending on $s,$ one cannot allow for larger
values of $r,$ which was possible for the case of Table~\ref{tab:100000}.

As a conclusion it is clear that increasing $s$ decreases the value of
$\epsilon_{\rm{total}}$ the most effectively for both Corollaries. On the other
hand, the value of $r$ for Corollary~\ref{cor1} can help decreasing the amount
of other required resources, like the length of the secret key or the number of auxiliary
qubits.

Finally, since the shared secret key between Alice and Bob is an expensive
resource it is worth noting that QASs with secret key recycling have been
devoted a lot of attention~\cite{reuse,HLM,Portmann, garg}, recycling means that
some portion of the key can be reused in a subsequent round of the QAS.
In~\cite{Portmann} it is proven that in a QAS constructed from a strong SPTC
(see definition below) every bit of the secret key can be reused in the case of
acceptance and that only the bits used for the quantum one-time pad have to be
disposed when the message is rejected. Portmann~\cite{Portmann} used the most
general security definition, that is, the adversary can possess the purification
of the quantum message to be authenticated and the consideration is not
restricted to  substitution attacks as in~\cite{HLM}. However, as mentioned
before, for this we need the SPTC to be strong, which means that it is required
that all non-identity Pauli-errors have to be detected with high probability,
not just the ones that do not act trivially on the message. Note that
in~\cite{Portmann} it is also shown that the SPTC in Corollary~\ref{cor1} is
actually a strong SPTC, therefore in the accept case all the secret key bits can
be reused.  


Importantly, Theorem~\ref{thm:ptc} gives a strong
SPTC since we count a Pauli error in the centralizer of the code, i.e., in
$Q_k^{\perp}$ and not in $Q_k^{\perp} \setminus Q_k$. So this means that when an
SPTC obtained from Theorem~\ref{thm:ptc} is used for quantum message
authentication then the secret key can be reused with the following conditions.
Namely, in the case of accepting the quantum message one can reuse the whole
secret key used in the QAS and only the encryption part of the key has to be
discarded in the case of rejection~\cite{Portmann}.
\onecolumngrid
\begin{center}
  \begin{table}[h]
      \begin{tabular} {|c|c|c|c|c|c|c|}
        \hline
        Code & \thead{Number of qubits\\ in the message} & \thead{Number of\\
        auxiliary qubits} & \thead{Number of necessary\\ secret key bits} &
        \thead{Block length}& $\epsilon$ & $\epsilon_{\rm{total}}$ \\ \hline
        Corollary 2 with $s=15$ & 99 990& 99 990 & 506 616 & 15 & 3.0519 $\cdot$
        $10^{-5}$ & 0.1841 \\ 
        Corollary 1 with $r=2$ and $s=15$ & 99 990 & 99 990 & 406 626 & 15 &
        9.155 $\cdot$ $10^{-5}$  & 0.4568\\ 
        Corollary 1 with $r=4$ and $s=15$ & 99 990 & 33 330 & 268 862 & 45 &
        2.1362 $\cdot$ $10^{-4}$ &0.3779\\
        Corollary 1 with $r=11$ and $s=15$ & 99 990 & 9 990 & 220 446 & 150 &
        6.4085 $\cdot$ $10^{-4}$ &0.3475\\ \hline
        Corollary 2 with $s=25$ & 100 000& 100 000 & 504 000 & 25 & 2.9802 $\cdot$
        $10^{-8}$ & $1.192\cdot{10^{-4}}$ \\ 
        Corollary 1 with $r=2$ and $s=25$ & 100 000 & 100 000 & 404 000 & 25 &
        8.9407 $\cdot$ $10^{-8}$  & 3.5756 $\cdot{10^{-4}}$\\ 
        Corollary 1 with $r=4$ and $s=25$ & 99 975 & 33 325 & 267 933 & 75 &
        2.0862 $\cdot$ $10^{-7}$ &2.7805 $\cdot 10^{-4}$\\
        Corollary 1 with $r=11$ and $s=25$ & 100 000 & 10 000 & 220 400 & 250 &
        6.2585 $\cdot$ $10^{-7}$ &2.5031 $\cdot 10^{-4}$\\ \hline
        Corollary 2 with $s=35$ & 99 995& 99 995 & 502 832 & 35 & 2.9104 $\cdot 
        10^{-11}$ & 8.315 $\cdot 10^{-8}$ \\ 
        Corollary 1 with $r=2$ and $s=35$ & 99 995 & 99 995 & 402 837 & 35 &
        8.7312 $\cdot$ $10^{-11}$  & 2.4945 $\cdot$ $10^{-7}$\\ 
        Corollary 1 with $r=4$ and $s=35$ & 99 960 & 33 320 & 267 512 & 105 &
        2.0373 $\cdot$ $10^{-10}$ & 1.9395 $\cdot$ $10^{-7}$\\
        Corollary 1 with $r=11$ and $s=35$ & 99 750 & 9 975 & 219 735 & 350 &
        6.1118 $\cdot$ $10^{-10}$ & 1.7419 $\cdot$ $10^{-7}$\\ \hline
    \end{tabular} 
    \caption{The properties of the quantum message authentication scheme in
    Protocol~\ref{prot1} for $\sim 10^5$ message qubits with SPTCs
    obtained via Theorem~\ref{thm:ptc} from different LECCs.}
    \label{tab:100000}   
  \end{table}
\end{center}
\begin{center}
\begin{table}[H]
    \begin{tabular} {|c|c|c|c|c|c|c|}
        \hline
        Code & \thead{Number of qubits\\ in the message} & \thead{Number of\\
        auxiliary qubits} & \thead{Number of necessary\\ secret key bits} &
        \thead{Block length}& $\epsilon$ & $\epsilon_{\rm{total}}$ \\ \hline
        Corollary 2 with $s=10$ & 100 & 100 & 510  & 10 & 0.001 & 0.0097 \\ 
        Corollary 1 with $r=2$, $s=10$ & 100  & 100 & 410 & 10 & 0.0029 &
        0.0289\\ 
        Corollary 1 with $r=4$, $s=10$ & 90 & 30 & 243 & 30 & 0.0068 &0.0203 \\
       \hline
        Corollary 2 with $s=15$ & 90 & 90 & 456 & 15 & $3.0519\cdot10^{-5}$ &
        $1.831\cdot 10^{-4}$ \\ 
        Corollary 1 with $r=2$, $s=15$ & 90  & 90 & 366 & 15 & $9.155\cdot
        10^{-5}$  &  $5.4917\cdot10^ {-4}$\\ 
        Corollary 1 with $r=4$, $s=15$ & 90  & 30 & 242 & 45 & $ 2.1362\cdot
        10^{-4}$ & $4.2718\cdot 10^{-4}$\\ \hline
        Corollary 2 with $s=30$ & 90 & 90 & 453 & 30 & $9.3132\cdot10^{-10}$ &
        $2.794\cdot 10^{-9}$ \\
        Corollary 1 with $r=2$, $s=30$ & 90  & 90 & 363 & 30 & $2.794\cdot
        10^{-9}$  &  $8.382\cdot10^ {-9}$\\ 
        Corollary 1 with $r=4$, $s=30$ & 90  & 30 & 241 & 90 & $ 6.5193\cdot
        10^{-9}$ & $6.5193\cdot 10^{-9}$\\ \hline
    \end{tabular} 
    \caption{The properties of the quantum message authentication scheme in
    Protocol~\ref{prot1} for $\sim 10^2$ message qubits with SPTCs obtained via
    Theorem~\ref{thm:ptc} from different LECCs.}
    \label{tab:100}   
\end{table}
\end{center}
\twocolumngrid

\section{Conclusion}\label{Concl} We have introduced a method to obtain
stabilizer purity testing codes (SPTCs) directly from classical linear error
correcting codes (LECCs). This provides a systematic way of obtaining SPTCs and
thus, also purity testing protocols (PTPs), which can decide with high
probability if a quantum state is close to a certain number of EPR pairs. Then,
for illustration purposes, we have evaluated the performance of the PTPs
constructed from two different families of LECCs for different quantum
communication applications, including error detection, entanglement purification
and quantum message authentication. For entanglement purification, we have
considered two different types of entanglement distillation protocols
introduced in~\cite{Ambainis02}. We found that different families of LECCs can be better
in optimizing a certain parameter of the protocols but for other parameters we
might need to resort to other families of LECCs. In the case of quantum message
authentication our method can be considered as a generalization of that
introduced in~\cite{BARNUM_2002} in the sense that we also use ideas from
projective geometry but it makes it possible to obtain more families of SPTCs
with parameters that can be tuned more flexibly. In this regard, we also found
that depending on the parameter of interest (which can be the number of secret
key bits or the error parameter of the scheme), it might be advantageous to
consider different families of LECCs. In this regard, our method gives more room to
engineer the parameters of the quantum authentication schemes compared
to~\cite{BARNUM_2002}, which we showed to originate from a particular LECC.
Importantly, our construction gives strong SPTCs, which means that they are
guaranteed to have good secret key recyclability properties. Most importantly,
it provides the possibility to tune the parameters of the
above-mentioned protocols further by using different families of LECCs beyond
the two that we have tested.

\section{ACKNOWLEDGEMENTS}
D.G. and M.C. thank the University Friedrich-Alexander Erlangen-Nürnberg for its
hospitality when this project was first conceived. D.G. was supported by an
Alexander von Humboldt Fellowship during his stay in Erlangen. R.T. acknowledges
the support of the EU (QuantERA MENTA), the Spanish MCIU (Grant No.
PCI2022-132947), the Basque Government (Grant No. IT1470-22). R.T. acknowledges
the support of the Grant No. PID2021-126273NB- I00 funded by
MCIN/AEI/10.13039/501100011033 and by ``ERDF A way of making Europe''. R.T. also
thanks the ``Frontline'' Research Excellence Programme of the NKFIH (Grant No.
KKP133827) and also the Project No. TKP2021-NVA-04, which has been implemented
with the support provided by the Ministry of Innovation and Technology of
Hungary from the National Research, Development and Innovation Fund, financed
under the TKP2021-NVA funding scheme. R.T. thanks the National Research,
Development and Innovation Office of Hungary (NKFIH) within the Quantum
Information National Laboratory of Hungary. S.B. acknowledges the grant
PID2023-147202NB-I00 of the Spanish Ministry of Science, innovation and
universities. M.C. acknowledges the
support by the Galician Regional Government (consolidation of Research Units:
AtlantTIC), the Spanish Ministry of Economy and Competitiveness (MINECO), the
Fondo Europeo de Desarrollo Regional (FEDER) through the grant No.
PID2020-118178RB-C21, MICIN with funding from the European Union
NextGenerationEU (PRTR-C17.I1) and the Galician Regional Government with own
funding through the ``Planes Complementarios de I+D+I con las Comunidades
Autónomas'' in Quantum Communication, the European Union’s Horizon Europe
Framework Programme under the Marie Sklodowska-Curie Grant No. 101072637
(Project QSI) and the project ``Quantum Security Networks Partnership'' (QSNP,
grant agreement No 101114043).
\nocite{*}
\appendix
\section{The companion matrix formalism for
$GF(2^s)$}\label{app:companion_matrix} We represent the elements of $GF(2^s)$ as
matrices with the help of a monic irreducible polynomial of degree $s$ over
$GF(2)$ which is a minimal polynomial of a generator of $GF(2^s)$. In other
words, we use a primitive polynomial of degree $s$ over $GF(2).$ 

In this way, the
additive and the multiplicative structure of $GF(2^s)$ become matrix
addition and multiplication, respectively. Let such a primitive polynomial over
$GF(2)$ be $c_0+c_1 x+\cdots +c_{s-1}x^{s-1}+x^s$ with $c_i\in GF(2)$. Then, the
companion matrix $C$ of this polynomial can be written as
\begin{equation}C = \begin{pmatrix} 
  0 & 0 & 0 & \dots \,0 & -c_0 \\
  1 & 0 & 0 & \dots \,0 & -c_1 \\
  0 & 1 & 0 & \dots \,0 & -c_2 \\
  \vdots & \vdots & \vdots & & \vdots \\
  0 & 0 & 0 & \dots \,1 & -c_{s-1}
  \end{pmatrix}.
\end{equation}
The elements of the field $GF(2^s)$ can be obtained as follows. The 0 element of
$GF(2^s)$ corresponds to the $s\times s$ zero matrix. The remaining $2^s-1$
elements are generated via the powers of $C$, so they can be listed as
$C,C^2,...,C^{2^s-1},$ where we note that $C^{2^s-1}=\mathds{1}_{s\times s}$ is
the $s\times s$ identity matrix that corresponds to the multiplicative
identity element (i.e., 1) of $GF(2^s).$ 

In particular, and for illustration purposes, let us represent
$GF(4)$ as matrices with the companion matrix method. The polynomial $x^2+x+1$
over $GF(2)$ is primitive. Thus, the $C$ matrix can be written as 
\begin{equation}
  C=\begin{pmatrix} 
    0 & 1  \\
    1 & 1
  \end{pmatrix},
\end{equation}
where we use the fact that the characteristic of $GF(2)$ is 2, thus $-1=1$. Using the
method described above, we can list the elements of $GF(4)$, represented as
matrices, as follows 
\begin{equation}\label{eq:companion_GF4}
  GF(4)=\left\{\begin{pmatrix} 
    0 & 0  \\
    0 & 0
  \end{pmatrix},\begin{pmatrix} 
    1 & 0  \\
    0 & 1
  \end{pmatrix},\begin{pmatrix} 
    0 & 1  \\
    1 & 1
  \end{pmatrix},\begin{pmatrix} 
    1 & 1  \\
    1 & 0
  \end{pmatrix}\right\}.
\end{equation} 
The elements of the $GF(4)$ field are usually represented as  
\begin{equation}\label{eq:normal_GF4}
  GF(4)=\{0,1,\mu,\mu+1\},
\end{equation} 
such that $\mu^2=\mu+1.$ Therefore, we have the following correspondence
between the two descriptions 
\begin{equation}\label{eq:GF4_companion}
  0\leftrightarrow \begin{pmatrix} 
    0 & 0  \\
    0 & 0
  \end{pmatrix},1 \leftrightarrow\begin{pmatrix} 
    1 & 0  \\
    0 & 1
  \end{pmatrix},\mu\leftrightarrow \begin{pmatrix} 
    0 & 1  \\
    1 & 1
  \end{pmatrix},\mu+1\leftrightarrow \begin{pmatrix} 
  1 & 1  \\
  1 & 0
\end{pmatrix}.
\end{equation} It is easy to check that the addition and
multiplication tables are the same for the two representations. 

\section{Pauli errors as vectors and their commutation
relations via the canonical symplectic form}\label{app:symplectic_form} Let
$\sigma_0\equiv I,\sigma_x\equiv X,\sigma_y\equiv Y,\sigma_z\equiv Z$ denote the
$2\times 2$ Pauli matrices. It is known~\cite{gotte02,Ball_geo} that these
matrices can be mapped to two-bit strings (row vectors) in the following manner
(the vertical lines help readability)
\begin{align}\label{eq:pauli_two_bit_vector} I &\leftrightarrow (0|0),\\\nonumber
X &\leftrightarrow (1|0),\\\nonumber Y &\leftrightarrow (1|1),\\\nonumber Z
&\leftrightarrow (0|1).
\end{align}
The tensor product of Pauli matrices can be obtained by collecting the
 corresponding bit values ($0,1$) from before and after the vertical line of
 each Pauli matrix into two groups, respectively. Then, these two groups are
 concatenated into a new vector~\cite{Ball_geo,gotte02}. They are also separated with a
 vertical line for readability. We illustrate this with the following examples
\begin{align}
  X\otimes Y &\leftrightarrow(11|01),\\\nonumber
  Y\otimes Y &\leftrightarrow(11|11),\\\nonumber
  X\otimes I \otimes I &\leftrightarrow(100|000).
\end{align}

Thus, ignoring phase-factors, there is a bijection between $n$-qubit Pauli
errors and the elements of the vector space $V(2n,2)$ (or in other words
$2n$-bit strings). Moreover, the product of two $n$-qubit Pauli errors is
mapped to addition in the $V(2n,2)$ vector space.

This description is advantageous since we can introduce a
canonical symplectic form that describes when two $n$-qubit Pauli errors
commute using binary addition modulo 2. Let us define this form for two $2n$-bit
strings $u,v\in V(2n,2)$ as 
\begin{equation}\label{eq:symp_form}(u,v)=u\Omega v^T,\end{equation}
where $\Omega$ is a $2n\times 2n$ matrix, composed of the following blocks
\begin{equation}\Omega=\begin{pmatrix} 
  0_{n\times n} & \mathds{1}_{n\times n}  \\
  \mathds{1}_{n\times n} & 0_{n\times n}
\end{pmatrix},\end{equation}
where $0_{n\times n} (\mathds{1}_{n\times n})$ denotes the $n\times n$ all-zero
(identity) matrix.
Two $n$-qubit Pauli errors commute if and
only if their corresponding $u,v\in V(2n,2)$ vectors fulfill that
\begin{equation}(u,v)=0.\end{equation}

It is easy to see that all the requirements hold for the form
in Eq.~\eqref{eq:symp_form} to be symplectic. Namely, it is linear in both
arguments, non-degenerate and alternating. In particular, it is non-degenerate since if
$(u,v)=0$ for $\forall v\in V(2n,2)$ then $u=0,$ which means that only the
identity commutes with all Pauli errors. It is alternating because $(v,v)=0$ for
$\forall v \in V(2n,2),$ implies that every Pauli error commutes with itself.

With this, we can state the following important fact. 
An $s$-dimensional subspace of $V(2n,2)$ spanned by the vectors $(u_1,u_2,...,u_s),$ 
with $u_i\in V(2n,2)$, generates a stabilizer QECC 
if and only if $(u_i,u_j)=0$ for $\forall i,j.$ This means that the corresponding $n$-qubit Pauli
 errors commute.  
In this case the subspace is called totally isotropic with respect to the symplectic form.
Moreover, since we have $s$ independent generators in the $n$-qubit space this 
means that we encode $n-s$ into $n$ qubits.

We emphasize that all symplectic forms are equivalent on $V(2n,2).$ 
This means that if a subspace is totally isotropic with respect to any symplectic form then 
 the corresponding set of Pauli 
errors generates a stabilizer QECC. We use this fact in the proof of Theorem~\ref{thm:points_stab}. 

\section{Constructing the stabilizers from
Theorem~\ref{thm:ptc}}\label{app:example_stabilizers} Here we provide an example
for explicitly constructing the stabilizers constituting the SPTC for the
special case of $r=s=2$ based on Theorem~\ref{thm:ptc}. In this case we need a
$[c,4,d]_4$ LECC. Let us use the code from
Corollary~\ref{cor1}. This means that we have a $[5,4,2]_4$
code with the following generator matrix\begin{equation}\label{eq:gen_example}
  G=\begin{pmatrix} 
    1 & 0 & 0 & 0 & 1 \\
    0 & 1 & 0 & 0 & 1 \\
    0 & 0 & 1 & 0 & 1 \\
    0 & 0 & 0 & 1 & 1 \end{pmatrix}, \end{equation} where 0 (1) is the zero
  (identity) element of $GF(4).$ Each column provides a stabilizer QECC that
  encodes $2(=rs-s)$ qubits into $4(=rs)$ qubits. We take the first column of
  $G$ as a row vector $a=(1,0,0,0)$ over $GF(4)$ and work out explicitly the
  stabilizers. For the remaining columns we only provide the results. Plugging in the
  companion matrix representation for the $GF(4)$ elements that we obtained in
  Eq.~\eqref{eq:companion_GF4} we have that $a$ corresponds to the following $2\times
  8(=s\times 2rs)$ matrix:
  \begin{equation}
    a\equiv\left(
      \begin{array}{cccc|cccc}
      1 & 0 & 0 & 0 & 0 & 0 & 0 & 0\\
      0 & 1 & 0 & 0 & 0 & 0 & 0 & 0\\
      \end{array}
      \right),
  \end{equation}
  where each row represents a generator of the stabilizer. Using the
 correspondence for representing Pauli errors as vectors from
 Eq.~\eqref{eq:pauli_two_bit_vector} we obtain the following stabilizers for the
 first column of $G$:
 \begin{align}\label{eq:stab1}
  X \otimes I\otimes I\otimes I,\\\nonumber
  I \otimes X\otimes I\otimes I.
 \end{align} 
 Similarly, the stabilizers corresponding to the second, third, fourth and fifth
 column of $G$ are as follows
 \begin{align}\label{eq:stab2}
  I\otimes I \otimes X\otimes I,\\\nonumber
  I\otimes I \otimes I\otimes X,
 \end{align}
 \begin{align}\label{eq:stab3}
  Z \otimes I\otimes I\otimes I,\\\nonumber
  I \otimes Z\otimes I\otimes I,
 \end{align}
 \begin{align}\label{eq:stab4}
  I\otimes I \otimes Z\otimes I,\\\nonumber
  I\otimes I \otimes I\otimes Z,
 \end{align} and
 \begin{align}\label{eq:stab5}
  Y\otimes I\otimes Y\otimes I,\\\nonumber
  I\otimes Y\otimes I\otimes Y,
 \end{align}
 respectively. According to Theorem~\ref{thm:ptc} the error probability of the
 SPTC consisting of the stabilizers provided by
 Eqs.~\eqref{eq:stab1},~\eqref{eq:stab2},~\eqref{eq:stab3},~\eqref{eq:stab4},~\eqref{eq:stab5}
 is $\epsilon=1-2/5=3/5.$

\end{document}